\newtheorem{theorem}{Theorem}
\newtheorem{lemma}{Lemma}
\newtheorem{corollary}{Corollary}
\newcommand{\blind}{0}
\begin{document}

\def\spacingset#1{\renewcommand{\baselinestretch}%
{#1}\small\normalsize} \spacingset{1}

%%%%%%%%%%%%%%%%%%%%%%%%%%%%%%%%%%%%%%%%%%%%%%%%%%%%%%%%%%%%%%%%%%%%%%%%%%%%%%

\if0\blind
{
  \title{\bf Testing for a difference in means of a single feature after clustering}
  \author{Yiqun T. Chen \hspace{.2cm}\\
    Department of Biomedical Data Science, Stanford University\\
    and \\
    Lucy L. Gao \\
    Department of Statistics, University of British Columbia}
  \maketitle
} \fi

\if1\blind
{
  \bigskip
  \bigskip
  \bigskip
  \begin{center}
    {\LARGE\bf Title}
\end{center}
  \medskip
} \fi

\bigskip
\begin{abstract}
For many applications, it is critical to interpret and validate groups of observations obtained via clustering. A common validation approach involves testing differences in feature means between observations in two estimated clusters. In this setting, classical hypothesis tests lead to an inflated Type I error rate. To overcome this problem, we propose a new test for the difference in means in a single feature between a pair of clusters obtained using hierarchical or $k$-means clustering. The test based on the proposed $p$-value controls the selective Type I error rate in finite samples and can be efficiently computed. We further illustrate the validity and power of our proposal in simulation and demonstrate its use on single-cell RNA-sequencing data. 
\end{abstract}

\noindent%
{\it Keywords: Hypothesis testing, Unsupervised learning, Post-selection inference, Type I Error}  

\spacingset{1.45}
%!TEX root = testing_sf.tex

\section{Introduction}
\label{sec:introduction}

Clustering algorithms, a collection of computational tools designed to group unlabelled data, are ubiquitously applied across fields to preprocess, visualize, and compress large data sets~\citep{jaeger2022cluster}. It is often of interest to interpret and validate the results from clustering a data set: for instance, in the context of single-cell RNA sequencing (scRNA-seq), researchers often cluster cells based on their gene expression profiles, and want to interpret the resulting clusters as categorical measures of an unobserved aspect of the cells' biological state, such as cell type~\citep{grun2015single, aizarani2019human}. Similarly, in market segmentation, an analyst might cluster customers according to their measurable characteristics such as age, gender, and spending habits, and subsequently assign each resulting cluster a descriptive label (e.g. ``outdoorsy customers") to inform market design~\citep{leisch2018market}. 

Here, we consider how to determine which features are significantly different between two groups obtained via a clustering algorithm. Concretely, suppose that we applied a clustering algorithm to divide $n$ observations into $K$ non-overlapping groups based on $q$ features. For a pair of groups and for a feature $j \in \{1, 2, \ldots, q\}$, we want to answer the question: ``\emph{How can we assess whether the population means of the $j$th feature are the same between the two groups?}" 

Answering this question yields valuable insights. First, identifying the subset of features that appear to have different population means across cluster pairs facilitates cluster interpretation. For instance, in scRNA-seq, if the data suggests that the two cell clusters have different population mean expression levels for known marker genes of specific cell subtypes (e.g., helper T cells and killer T cells), this supports interpreting these cell clusters as the corresponding subtypes. Second, answers to this question could assist in evaluating the validity and generalizability of the obtained clusters. Given that clustering algorithms always output distinct clusters --- even when applied to observations from a single population --- observing at least one feature with population means across clusters increases our confidence in the resulting clusters, as well as the potential for generalizing our clustering results to new independent data sets. 

%cleaned until here
To ascertain whether the population mean of each feature is the same between groups, applying a classical test for equality of means between two populations (e.g., the two-sample $t$-test) for each feature and cluster pair might seem intuitive. However, such an approach ignores the fact that the null hypothesis of equal population means of a given feature between two clusters depends on the data used for testing, since the clusters are estimated on the same data. This leads to a failure to control the selective Type I error rate \citep{fithian2014optimal}; that is, the probability of falsely rejecting the null hypothesis, given that we chose to test it. Furthermore, sample splitting does not provide an adequate solution in this context, as clustering a subset of the observations does not directly lead to cluster assignments for the remaining observations; detailed discussion is available in~\citet{gao2022selective, chen2022selective, neufeldinference2022}.

In this paper, we develop a finite-sample selective inference framework \citep{fithian2014optimal} for testing for a difference in means of a single feature in two clusters, under a multivariate Gaussian assumption. In short, to account for the fact that the clusters are estimated using the same data used for testing, we condition on the event that the clustering algorithm outputs a particular partition of the observations, thereby controlling the selective Type I error rate. In the special case of $k$-means clustering and hierarchical clustering --- two of the most popular forms of clustering --- we provide an analytical characterization of the conditioning set that enables efficient and exact computation of our proposed $p$-value. 

Our work is closely related to \citet{gao2022selective} and \citet{chen2022selective} and amounts to extending their selective inference framework for testing the difference in \emph{vector means} to \emph{individual feature means}. While this manuscript is under preparation, \citet{hivert2022post} proposed a related selective inference framework for the difference between the mean of a single feature in two clusters. Compared to their work, our proposal (i) does not assume that the features used for clustering are independent; and (ii) computes the $p$-value exactly with a computationally efficient algorithm, rather than approximating the $p$-value via Monte Carlo sampling. Methods developed in this paper are implemented in the \texttt{R}
package \texttt{CADET} (Clustering And Differential Expression Testing) available at \url{https://github.com/yiqunchen/CADET}. Data and code for reproducing the results in this paper can be found at \url{https://github.com/yiqunchen/CADET-experiments}.

The rest of our paper is organized as follows. In Section~\ref{sec:motivation}, we review the problem of testing for a difference in means after clustering. In Sections~\ref{sec:select} and \ref{sec:truncation}, we propose tests that control the selective Type I error rate when testing for a difference in means after hierarchical or $k$-means clustering, and provide a computationally-efficient approach to compute the $p$-values corresponding to our proposal. We evaluate our proposal in a simulation study in Section~\ref{sec:simulation} and apply our proposal to real datasets in Section~\ref{sec:application}. Proofs and additional results are relegated to the Appendix. 

% temp remove for space but should put back
Throughout this paper, we will use the following notational conventions. $I_n$, $0_n$ and $ \mathds{1}\{\cdot\}$ denote the $n$-dimensional identity matrix, $n$-vector of zeros, and the indicator function, respectively. For a matrix ${A}$, ${A}_i$ denotes the $i$th row and $A_{ij}$ denotes the $(i,j)$th entry. For a vector $\nu \in \mathbb{R}^n$, $\Vert\nu\Vert_2$ denotes its $\ell_2$ norm, and ${\Pi}_\nu^\perp$ is the projection matrix onto the orthogonal complement of $\nu$, i.e., ${\Pi}_\nu^\perp = I_n-\nu\nu^{\top}/\Vert\nu\Vert_2^2$.
%!TEX root = testing_sf.tex

\section{Motivation}
\label{sec:motivation}
\subsection{Model and data} 
Let $x \in \mathbb{R}^{n \times q}$ be a data matrix with $n$ observations of $q$ features. For $\mu \in \mathbb{R}^{n \times q}$ with unknown rows $\mu_i \in \mathbb{R}^q$ and known, positive-definite $\Sigma \in \mathbb{R}^{q \times q}$, we assume that $x$ is a realization of a random matrix $X$, where rows of $X$ are independent and drawn from a multivariate normal distribution:
\begin{equation} 
X_i \sim N_q(\mu_i, \Sigma), \quad i = 1, 2, \ldots, n. \label{eq:mod}
\end{equation} 

\subsection{Testing two pre-defined groups} 
\label{sec:pre-defined}

Let $j \in \{1, 2, \ldots, q\}$. For any $G \subset \{1, 2, \ldots, n\}$, let
\begin{equation} 
\bar{\mu}_{Gj} = \sum \limits_{i \in G} \sum \limits_{j=1}^q \mu_{ij}/|G|, 
\end{equation} 
be the mean of the $j$th feature in the group $G$. Consider using $x$ to test the null hypothesis that there is no difference in the mean of the $j$th feature across two \emph{pre-defined}, non-overlapping groups $G, G' \subset \{1, 2, \ldots, n\}$, i.e. 
\begin{equation} 
H_{0j}: \bar{\mu}_{Gj} = \bar{\mu}_{G'j} \text{ versus } H_{1j}: \bar{\mu}_{Gj} \neq \bar{\mu}_{G'j}.
%H_{0j}: \frac{1}{|G|} \sum \limits_{i \in G} \sum \limits_{j=1}^q \mu_{ij} = \frac{1}{|G'|} \sum \limits_{i \in G'} \sum \limits_{j=1}^q \mu_{ij} \text{ versus } H_{0j}: \frac{1}{|G|} \sum \limits_{i \in G} \sum \limits_{j=1}^q \mu_{ij} \neq \frac{1}{|G'|} \sum \limits_{i \in G'} \sum \limits_{j=1}^q \mu_{ij}. 
\label{eq:hyp-z}
\end{equation} 
This is equivalent to testing $H_{0j}: [\mu^T \nu]_j = 0$ versus $H_{1j}: [\mu^T \nu]_j \neq 0$, where $\nu$ is the $n$-vector with $i$th element given by $\mathds{1} \{i \in G\}/|G| - \mathds{1} \{i \in G'\}/|G'|$. Since $G$ and $G'$ are chosen independently of the data used for testing, we could test $H_{0j}$ by applying the two-sample $Z$-test, with $p$-value given by 
$\mathbb{P}_{H_{0j}} \left ( | [X^T \nu]_j | \geq | [x^T \nu]_j| \right )$. Under \eqref{eq:mod}, this amounts to computing 
$1 - 2 \Phi \big (\big |[\nu^T x]_j \big | / \left (\|\nu\|_2^2 \Sigma_{jj} \right ) \big )$, where $\Phi(\cdot)$ is the cumulative distribution function of the standard normal distribution.

\subsection{What changes when the groups are estimated clusters?}

\label{sec:estimated-groups}

Let  $\mathcal{C}(\cdot)$ be a clustering algorithm that takes in a data matrix $x$ with $n$ rows and outputs a partition of $\{1,2, \ldots, n\}$. Suppose that we now want to use $x$ to test the null hypothesis that there is no difference in the mean of the $j$th feature across two groups obtained by applying $\mathcal{C}(\cdot)$ to $x$, i.e. 
\begin{equation} 
\hat H_{0j}: \bar{\mu}_{\hat G j } =  \bar{\mu}_{\hat G' j } \text{ versus } \hat H_{1j}: \bar{\mu}_{\hat G j } \neq  \bar{\mu}_{\hat G' j },
%\hat H_{0j}: \frac{1}{|\hat G|} \sum \limits_{i \in \hat G} \sum \limits_{j=1}^q \mu_{ij} = \frac{1}{| \hat G'|} \sum \limits_{i \in \hat G'} \sum \limits_{j=1}^q \mu_{ij} \text{ versus } \hat H_{0j}: \frac{1}{|G|} \sum \limits_{i \in \hat G} \sum \limits_{j=1}^q \mu_{ij} \neq \frac{1}{|\hat G'|} \sum \limits_{i \in \hat G'} \sum \limits_{j=1}^q \mu_{ij},
\label{eq:hyp}
\end{equation} 
where $\hat G, \hat G' \in \mathcal{C}(x)$ are a pair of estimated clusters. This is equivalent to testing $\hat H_{0j}: [\mu^T \hat \nu]_j = 0$ versus $\hat H_{1j}: [\mu^T \hat \nu]_j \neq 0$, where $\hat \nu$ is the $n$-vector with $i$th element given by 
\begin{equation} 
[\hat \nu]_i = \mathds{1} \{i \in \hat G\}/|\hat G| - \mathds{1} \{i \in \hat G'\}/|\hat G'|.
\end{equation} 
The challenge is that $\hat H_{0j}$ is a function of the data used to test it, because $\hat G$ and $\hat G'$ are estimated clusters in $\mathcal{C}(x)$. We could naively treat $\hat G$ and $\hat G'$ as pre-specified groups, and test $\hat H_{0j}$ by applying the two-sample $Z$-test as described in Section \ref{sec:pre-defined}. This would lead to the following $p$-value:
\begin{equation} 
p_{j, \text{naive}} = \mathbb{P}_{\hat H_{0j}} \left ( | [X^T \hat \nu]_j | \geq | [x^T \hat \nu]_j| \right )
% = 1 - 2 \Phi \Bigg (\Big |[ \hat \nu^T x]_j \Big | / \left (\|\hat \nu\|_2^2 \Sigma_{jj} \right ) \Bigg ).  -- yiqun: hiding the spelled-out equation for conference page limit
\label{eq:naive}
\end{equation} 
That is, we could compare the values of $x_{ij}$ for $i \in \hat G \cup \hat G'$ to the distribution of $X_{ij}$ for $i \in \hat G \cup \hat G'$. However, because $\mathcal{C}(X)$ is random and dependent on $X$, the distribution of $X_{ij}$ stratified by $\mathcal{C}(X)$ can be far from the distribution of $X_{ij}$ stratified by $\mathcal{C}(x)$. Consequently, over repeated realizations of $X$, applying the two-sample $Z$-test to compare the means of two estimated clusters will lead to anti-conservative inference.

To illustrate this in an example, we simulate data from \eqref{eq:mod} with $n=150,q=10$, and 
{
\begin{align}
\label{eq:intro_toy_model}
{\mu}_i = (1 ~ 0_9)^T \text{ for } i \leq 50; ~ {\mu}_i = ( 0_9 ~ 1)^T \text{ for } i > 50, 
\end{align}}
with $\Sigma_{ij} = 1\{i=j\}+0.4\cdot 1\{i\neq j\}$
so that we have two equally sized true clusters that differ in the first and last features only. In each simulated data set, we apply $k$-means clustering to obtain two clusters, and test for a difference in the means of the estimated clusters for each of the eight features. Across all of the simulated data sets, there is no difference in the means of the estimated clusters for features 2--8 under \eqref{eq:intro_toy_model}; see Figure~\ref{fig:toy_example}(a). Nevertheless, there can be a substantial difference in the empirical means of the estimated clusters for features 2--8; see Figure~\ref{fig:toy_example}(b).  Thus, over 1,500 simulated data sets, and over features 2--8, the $p$-values from the two-sample $Z$-test appear far smaller than a Uniform$(0, 1)$ distribution; by contrast, the $p$-values from our proposed test (Section \ref{sec:select}) follow a Uniform$(0, 1)$ distribution (Figure \ref{fig:toy_example}(c)).

%Under \eqref{eq:intro_toy_model}, $H_{0j}(\hat G, \hat G')$  holds for features $j= 2,\ldots,8$ and any pairs of estimated clusters $\hat G, \hat G'$. However, because $H_{0j}(\hat G, \hat G')$ depends on the estimated clusters, we observe a substantial difference in $\bar{\mu}_{\hat{G}j}$ and $\bar{\mu}_{\hat{G}'j}$ (see Figure~\ref{fig:toy_example}(b)), even in the absence of true differences in their population means (illustrated in Figure~\ref{fig:toy_example}(a)). Thus, $p_{Z-test}(X; \hat G, \hat G', j)$ is not uniformly distributed, even though $H_{0j}(\hat G, \hat G')$ is always true; by contrast, our proposed p-value (Section \ref{sec:select}) is uniformly distributed (see Figure~\ref{fig:toy_example}(c)). 

\begin{figure*}[htbp!]
  \centering
  \includegraphics[width=0.9\linewidth]{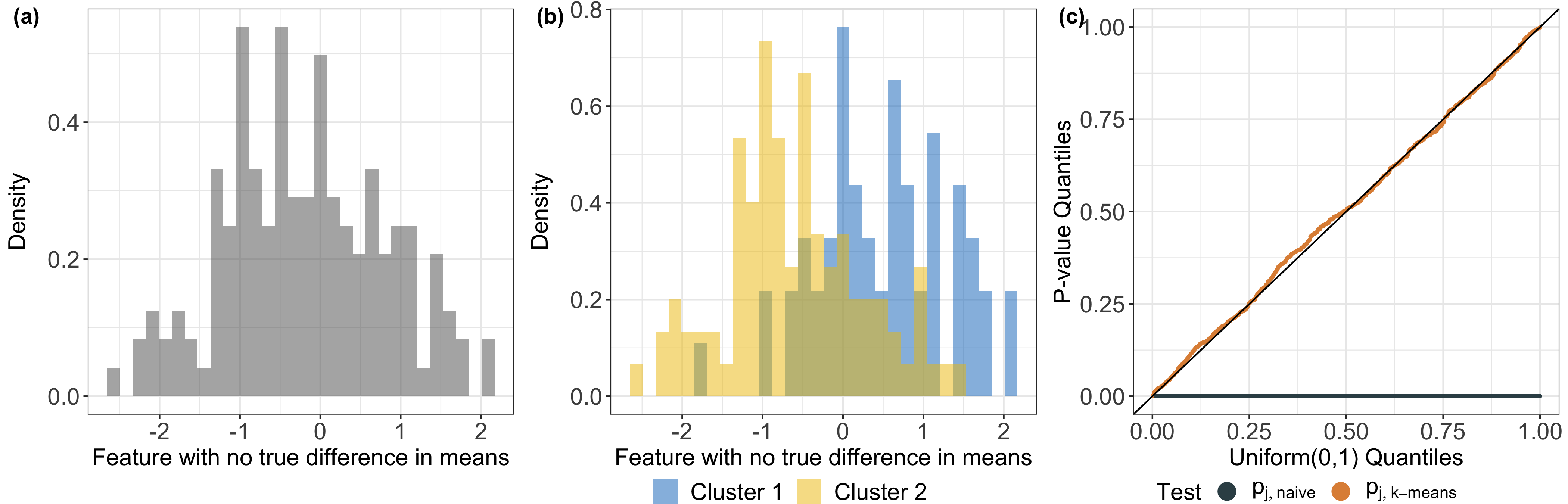}
\caption{We simulated one data set from \eqref{eq:mod} with ${\mu}$ and $\Sigma$ specified in \eqref{eq:intro_toy_model}. \textit{(a) } Empirical distribution of feature 2 based on the simulated data set. \textit{(b): } We apply $k$-means clustering to obtain two clusters and plot the empirical distribution of feature 2 stratified by the clusters. \textit{(c): } Quantile-quantile plot of the $p$-values from the two-sample $Z$-test applied to the estimated clusters (defined in \eqref{eq:naive}) and our proposal (defined in \eqref{eq:selective}), aggregated over 1,500 simulated data sets and over features 2--8 (i.e., the features with no true difference in means across \emph{any} pairs of estimated clusters).}
\label{fig:toy_example}
\end{figure*}

%!TEX root = testing_sf.tex
\section{Selective inference for the mean of a single feature} 
\label{sec:select}

We will overcome the challenges discussed in Section \ref{sec:estimated-groups} by developing a selective inference framework \citep{fithian2014optimal} for testing the equality of the means of a single feature between two \emph{estimated} clusters. 

\subsection{The ``ideal'' p-value}

In the setting outlined in Section \ref{sec:estimated-groups}, we chose to test $\hat H_{0j}$ in \eqref{eq:hyp} using $x$ because $\hat G, \hat G' \in \mathcal{C}(x)$. Thus, \citet{fithian2014optimal} argues that we should apply a test that controls the \emph{selective Type I error rate} at level $\alpha$, which guarantees that the proportion of times we falsely reject a selected null hypothesis is controlled at level $\alpha$ over repeated realizations of $X$:
{
\begin{equation} 
\mathbb{P}_{\hat H_{0j}} \left ( \text{Reject } \hat H_{0j} \text{ at level } \alpha \mid \text{Choose to test } \hat H_{0j}  \right ) \leq \alpha. \label{eq:selective_type_1}
\end{equation} 
}

This motivates the following conditional version of the two-sample $Z$-test in \eqref{eq:naive} to test $\hat H_{0j}$: 
\begin{equation} 
\mathbb{P}_{\hat H_{0j}} \left ( |[X^T \hat \nu]_j| \geq |[x^T \hat \nu]_j| ~ \Big | ~ \mathcal{C}(X)=\mathcal{C}(x) \right ), \label{eq:ideal} 
\end{equation} 
where we conditioned on $\{\mathcal{C}(X)=\mathcal{C}(x)\}$ because the hypothesis of interest was chosen based on the clustering output. By the probability integral transform, rejecting $\hat H_{0j}$ if the $p$-value in \eqref{eq:ideal} is less than $\alpha$ controls the selective Type I error rate at level $\alpha$.

In practice, computing \eqref{eq:ideal} is challenging, as (i) the conditional distribution of $[X^T \hat \nu]_j$ depends on unknown parameters that are left unspecified by $\hat H_{0j}$; and (ii) the conditioning set $\{X \in \mathbb{R}^{n \times q}:\mathcal{C}(X)=\mathcal{C}(x)\}$ depends on the clustering algorithm $\mathcal{C}$ and could be highly non-trivial to characterize. In Section~\ref{section:truncated_pval}, we will overcome these two challenges by modifying \eqref{eq:ideal} to condition on extra information; this leads to a computationally tractable test that controls the selective Type I error rate when the clusters are obtained via hierarchical or $k$-means clustering. 
% leaving the references to ~\citep{fithian2014optimal,Lee2016-te}
\subsection{Truncated Gaussian p-value} 
\label{section:truncated_pval}
To overcome the challenges in computing \eqref{eq:ideal}, we condition on additional events and compute: 
\begin{align} 
 \label{eq:selective} 
 p_{j, \text{selective}}  = \mathbb{P}_{\hat H_{0j}} \Big (  \big |[X^T \hat \nu]_j \big | \geq \big |[x^T \hat \nu]_j  \big | ~ \Big |~ \mathcal{C}(X) = \mathcal{C}(x), U(X) = U(x) \Big ),
\end{align}
where 
\begin{align} 
U(x) = x - \frac{\hat \nu \Sigma_{j}^T [x^T \hat \nu]_j }{\|\hat \nu\|_2^2 \Sigma_{jj}}. \label{eq:defU}
\end{align}
Compared to \eqref{eq:ideal}, we have conditioned on  $\left \{ U(X) = U(x) \right \}$. This choice does not sacrifice control of the selective Type I error rate (see Proposition 3 in \citet{fithian2014optimal}). Furthermore, we can rewrite $X$ in \eqref{eq:mod} as:  
{\begin{align} 
 X
%&=  \frac{\nu}{\|\nu\|_2^2} \left ( X^T \nu -  \left ( \frac{ \Sigma_j}{\Sigma_{jj}} \right ) [X^T \nu]_j \right )^T  + \left ( \frac{\nu}{\|\nu\|_2^2} \right ) \left (\frac{\Sigma_j}{\Sigma_{jj}} \right )^T [X^T \nu]_j   + \left (I - \frac{\nu \nu^T }{\|\nu\|_2^2} \right ) X  \label{eq:decomp} \\ 
&= \left ( X - \frac{\hat \nu \Sigma_{j}^T [X^T \hat \nu]_j }{\|\hat \nu\|_2^2 \Sigma_{jj}} \right ) + \left ( \frac{\hat \nu}{\|\hat \nu\|_2^2} \right ) \left (\frac{\Sigma_j}{\Sigma_{jj}} \right )^T [X^T \hat \nu]_j,\label{eq:decomp}
\end{align} }
where the first term is in the conditioning set of \eqref{eq:selective} and the second term depends on $X$ only through our test statistic $[X^T \hat \nu]_j$. It turns out that the two terms on the right-hand-side of \eqref{eq:decomp} are independent under model \eqref{eq:mod}. 
%The decomposition in \eqref{eq:decomp} says that the randomness in $X$ arises from three random variables with the following interpretation:  
%\begin{enumerate} 
%\item $X^T \nu -  \left ( \frac{ \Sigma_j}{\Sigma_{jj}} \right ) [X^T \nu]_j$:  under $H_{0j}(G, G')$ in \eqref{eq:hyp}, this is the residual of the best linear predictor of $(\bar{X}_G - \bar{X}_{G'}) - (\bar{\mu}_G - \bar{\mu}_{G'})$ using $\bar{X}_{Gj} - \bar{X}_{G'j}$.
%\item $[X^T \nu]_j$: this is the test statistic for $H_{0j}(G, G')$, $\bar{X}_{Gj} - \bar{X}_{Gj'}$. 
%\item $\left (I - \frac{\nu \nu^T }{\|\nu\|_2^2} \right ) X$: this is the residual of the best linear predictor of $X$ using $\nu$. 
%\end{enumerate} 
%Furthermore, these three random variables are independent, as the rows of $X$ in \eqref{eq:mod} follow independent multivariate normal distributions. 
Thus, when evaluating the conditional probability in \eqref{eq:selective},  we only need to consider the randomness in $X$ coming from the \emph{scalar-valued} test statistic $[X^T \hat \nu]_j$, despite the fact that all of $X$ is involved in the conditioning event $\{\mathcal{C}(X)=\mathcal{C}(x)\}$. 

This intuition is formalized in the following result, which says that (i) computing \eqref{eq:selective} involves a truncated univariate normal distribution; and (ii) testing $\hat H_{0j}$ using \eqref{eq:selective} controls the selective Type I error rate. 
% Suppose that x is a realization from (1), and let φ ∼ (σkνk2)χq.
% Then, under H0 : µ
\begin{theorem} 
\label{thm:truncnorm}
Suppose that $j \in \{1, 2, \ldots, q\}$, $x$ is a realization from \eqref{eq:mod}, and $\mathbb{F}(t; \mu, \sigma, \mathcal{S})$ denotes the cumulative distribution function (CDF) of a $N(\mu, \sigma^2)$ random variable truncated to the set $\mathcal{S}$. Then, for $p_{j, selective}$ defined in \eqref{eq:selective}, we have that 
{\begin{align} 
\begin{split}
\label{eq:selective_cdf}
p_{j, \text{selective}} = 1 - \mathbb{F} \left (\big | [\hat \nu^T x]_j \big | ; 0, \Sigma_{jj}\|\hat \nu\|_2^2; \mathcal{\hat S}_j \right ) + \mathbb{F}\left (-\big | [\hat \nu^T x]_j \big | ; 0, \Sigma_{jj} \|\hat \nu\|_2^2;  \mathcal{\hat S}_j \right ),
\end{split}
\end{align}} 
where $x'(\phi,j) = x + (\phi - ( \bar{x}_{\hat{G}j} - \bar{x}_{\hat{G'}j})) \left ( \frac{ \hat\nu }{ \|\hat\nu\|_2^2 } \right ) \left ( \frac{\Sigma_j}{\Sigma_{jj}} \right )^T,$  and
{\begin{equation} 
\hat S_j = 
%\left \{ \Delta \in \mathbb{R}: G, G' \in \mathcal{C}\left(x + \Delta \left ( \frac{ \nu }{ \|\nu\|_2^2 } \right ) \left ( \frac{\Sigma_j}{\Sigma_{jj}} \right )^T \right ) \right \} + ( \bar{x}_{Gj} - \bar{x}_{G'j}), \label{eq:defS} 
\left \{ \phi \in \mathbb{R}: C(x) = \mathcal{C}\left(x'(\phi,j)\right ) \right \}. \label{eq:defS}
\end{equation} }
Furthermore, the test that rejects $\hat H_{0j}: \bar{\mu}_{\hat Gj} = \bar{\mu}_{\hat G' j}$ when $p_{j, selective} \leq \alpha$ controls the selective type I error rate at level $\alpha$, in the sense of \eqref{eq:selective_type_1}.  
\end{theorem}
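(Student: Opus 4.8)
The plan is to reduce the conditional probability in \eqref{eq:selective} to a one-dimensional truncated Gaussian calculation, using the decomposition \eqref{eq:decomp}. I would first condition on $\{\mathcal{C}(X) = \mathcal{C}(x)\}$, which fixes the observed partition and hence renders the contrast vector $\hat\nu$ a deterministic quantity equal to its observed value. On this event the test statistic is $[X^T\hat\nu]_j = \sum_i \hat\nu_i X_{ij}$, a linear combination of the independent Gaussians $X_{ij} \sim N(\mu_{ij}, \Sigma_{jj})$, so $[X^T\hat\nu]_j \sim N([\mu^T\hat\nu]_j, \|\hat\nu\|_2^2\Sigma_{jj})$, with mean $0$ under $\hat H_{0j}$. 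This identifies the parameters $0$ and $\Sigma_{jj}\|\hat\nu\|_2^2$ appearing in \eqref{eq:selective_cdf}.

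Next I would establish the independence asserted just after \eqref{eq:decomp}. Since $U(X)$ and $[X^T\hat\nu]_j$ are both linear in $X$ and jointly Gaussian once $\hat\nu$ is fixed, it suffices to check that every entry of $U(X)$ has zero covariance with $[X^T\hat\nu]_j$. A direct computation using $\operatorname{Cov}(X_{ik}, X_{ij}) = \Sigma_{kj}$ and the independence of rows gives $\operatorname{Cov}(X_{ik}, [X^T\hat\nu]_j) = \hat\nu_i\Sigma_{kj}$, which is cancelled exactly by the term subtracted in \eqref{eq:defU} thanks to its $\Sigma_{kj}/\Sigma_{jj}$ weighting; this is precisely the step where the method's tolerance of correlated features is used. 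Consequently, conditioning on $\{U(X) = U(x)\}$ leaves the null law $N(0, \Sigma_{jj}\|\hat\nu\|_2^2)$ of $[X^T\hat\nu]_j$ unchanged.

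I would then reparametrize the conditioning set. On $\{U(X) = U(x)\}$ the matrix is recovered from the scalar $\phi := [X^T\hat\nu]_j$ via $X = U(x) + \phi\,(\hat\nu/\|\hat\nu\|_2^2)(\Sigma_j/\Sigma_{jj})^T$; substituting $U(x)$ from \eqref{eq:defU} and using $[x^T\hat\nu]_j = \bar x_{\hat Gj} - \bar x_{\hat G'j}$ shows this equals exactly $x'(\phi, j)$. Hence, on this slice, $\{\mathcal{C}(X) = \mathcal{C}(x)\}$ coincides with $\{\phi \in \hat S_j\}$ for $\hat S_j$ as in \eqref{eq:defS}. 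Combining the three steps, the conditional law of $[X^T\hat\nu]_j$ given both conditioning events is $N(0, \Sigma_{jj}\|\hat\nu\|_2^2)$ truncated to $\hat S_j$, and rewriting the two-sided tail probability $\mathbb{P}(|\phi| \ge |[x^T\hat\nu]_j|)$ through the truncated CDF $\mathbb{F}$ yields \eqref{eq:selective_cdf}.

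Finally, for the selective Type I error guarantee I would invoke the probability integral transform: conditional on $\{\mathcal{C}(X) = \mathcal{C}(x), U(X) = U(x)\}$ the continuous statistic $[X^T\hat\nu]_j$ makes $p_{j,\text{selective}}$ uniform on $(0,1)$, so $\mathbb{P}(p_{j,\text{selective}} \le \alpha \mid \mathcal{C}(X) = \mathcal{C}(x), U(X) = U(x)) = \alpha$; averaging over $U(X)$ given the partition, and then over the partition event inside the coarser selection event via the tower property, delivers \eqref{eq:selective_type_1}. The main obstacle is making the reparametrization rigorous: one must justify conditioning on the measure-zero event $\{U(X) = U(x)\}$ through a conditional-density (disintegration) argument, and verify that as $\phi$ sweeps $\mathbb{R}$ the perturbed data $x'(\phi, j)$ traces exactly the fiber $\{X : U(X) = U(x)\}$, with re-running the clustering on each $x'(\phi, j)$ defining membership in $\hat S_j$. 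The computational usefulness of the result then hinges on characterizing $\hat S_j$ analytically for hierarchical and $k$-means clustering, which is deferred to Section~\ref{sec:truncation}.
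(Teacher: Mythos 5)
Your proposal is correct and follows the same overall architecture as the paper's proof: the decomposition \eqref{eq:decomp}, the observation that on the slice $\{U(X)=U(x)\}$ the data matrix equals $x'(\phi,j)$ with $\phi=[X^T\hat\nu]_j$ (so that $\{\mathcal{C}(X)=\mathcal{C}(x)\}$ becomes $\{\phi\in\hat S_j\}$), the resulting truncated-Gaussian form \eqref{eq:selective_cdf}, and the probability integral transform plus the tower property over $U(X)$ for the selective Type I error claim. Where you genuinely depart from the paper is the proof of the independence $U(X)\perp\!\!\!\perp[X^T\hat\nu]_j$: the paper splits $U(X)$ into a row-space projection term $\bigl(I_n-\hat\nu\hat\nu^T/\|\hat\nu\|_2^2\bigr)X$, handled by orthogonality, plus a feature-space residual term $X^T\hat\nu-\tfrac{\Sigma_j}{\Sigma_{jj}}[X^T\hat\nu]_j$, handled by a standalone Gaussian block-regression lemma (Lemma~\ref{lem:obscure}); you instead note that $U(X)$ and $[X^T\hat\nu]_j$ are jointly Gaussian linear functions of $X$ and verify entrywise that $\operatorname{Cov}\bigl(U(X)_{ik},[X^T\hat\nu]_j\bigr)=\hat\nu_i\Sigma_{kj}-\tfrac{\hat\nu_i\Sigma_{jk}}{\|\hat\nu\|_2^2\Sigma_{jj}}\cdot\|\hat\nu\|_2^2\Sigma_{jj}=0$ by symmetry of $\Sigma$. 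Your calculation is valid and more elementary — it makes transparent exactly where the correction for correlated features enters — while the paper's two-term decomposition isolates a reusable lemma and mirrors the structure of the arguments in Gao et al.~and Chen and Witten that the rest of the paper builds on (e.g., the form of $\mathcal{S}(w)$ in Section~\ref{sec:truncation}). One caution on phrasing: your opening claim that conditioning on $\{\mathcal{C}(X)=\mathcal{C}(x)\}$ ``renders $\hat\nu$ deterministic'' and that $[X^T\hat\nu]_j$ is then Gaussian is loose — the conditional law of $[X^T\hat\nu]_j$ given that event is precisely \emph{not} Gaussian (that is the whole point of the truncation); the clean statement, which your later steps implicitly adopt and which the paper uses, is that $\hat\nu$ is a fixed vector computed from the realization $x$, the \emph{unconditional} law of $[X^T\hat\nu]_j$ is $N(0,\|\hat\nu\|_2^2\Sigma_{jj})$ under $\hat H_{0j}$, and the conditional law given both events is that Gaussian truncated to $\hat S_j$. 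Your remark about disintegration for the measure-zero event $\{U(X)=U(x)\}$ is a fair technical point, but the paper does not treat it any more rigorously, so it is not a gap relative to the paper's own proof.
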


It follows from Theorem \ref{thm:truncnorm} that computing the selective $p$-value in \eqref{eq:selective} amounts to computing the truncation set in \eqref{eq:defS}. The next section is dedicated to understanding and computing this truncation set.

%!TEX root = testing_sf.tex
\section{The truncation set}
\label{sec:truncation}

% \fromlucy{Have $c^{(k)}$ be the cluster assignments after $k$ steps of either algorithm to cut down on notation?}

% \fromlucy{Also reasonable to have the algorithms in the appendix???}

\subsection{Intuition} 
\label{section:Intuition}
% \yiqun{check for $x'(\phi,j)$ consistency}
% \yiqun{I realized that the intuition here is .. a lot murkier than testing for the vector mean (eg the whole 0 is definitely not in the set S argument becomes a lot less true..) -- let me know if you think this following section is clear and/or adds anything to the work}

The set $\hat{S}_j$ defined in \eqref{eq:defS} represents the values of $\phi$ for which the clustering algorithm $\mathcal{C}$, when applied to $x'(\phi,j)$, yields the clustering output $C(x)$.  Here, ${x}'(\phi,j) = x + (\phi - ( \bar{x}_{\hat{G}j} - \bar{x}_{\hat{G}'j})) \left ( \frac{ \hat\nu }{ \|\hat\nu\|_2^2 } \right ) \left ( \frac{\Sigma_j}{\Sigma_{jj}} \right )^T$ can be interpreted as a perturbation to the observed data ${x}$. 

Figure~\ref{fig:x_phi} illustrates a realization of \eqref{eq:mod} with $n=30, q=2$, and a covariance matrix $\Sigma$ encoding moderate correlation (0.4) between any two features.  Panel (a) displays the observed data ${x}$, which corresponds to ${x}'(\phi)$ with $\phi={x}_1^T\hat\nu =-3$. Here, $\hat\nu$ was chosen to test the difference between $\hat{G}$ (shown in blue) and $\hat{G'}$ (shown in rosy brown), estimated from $k$-means clustering with $K=3$. Panels (b) and (c) of Figure~\ref{fig:x_phi} display ${x}'(\phi)$ with $\phi = 0$ and $\phi = 6$, respectively. In panel (b), with $\phi=0$, the blue and rosy brown clusters are ``pushed together'' in the first feature, resulting in ${x}'(\phi)_1^T \hat\nu =  0$; that is, there is no difference in empirical means between feature 1 (x-axis of panel (b)) of the two clusters under consideration. By contrast, in panel (c), with $\phi=-5$, the blue and rosy brown clusters are ``pulled apart'', which results in an increased distance between the first feature of the blue and rosy brown clusters. 

When put together, panels (a)--(c) reveal that (i) $\phi$ can be interpreted as the observed ``test statistic'' $x_j^T\hat\nu$ on $x'(\phi)$; (ii) varying $\phi$ only changes the values of clusters $\hat{G}$, $\hat{G}'$ and leaves the other clusters (e.g., the orange cluster in Figure~\ref{fig:x_phi}) alone; and (iii)  $\phi$ moves the observed difference in all features correlated with the feature being tested (e.g., feature 2 in Figure~\ref{fig:x_phi}); we visualize the magnitude of the changes in Figure~\ref{fig:x_phi}(d). In this case, the slope of the blue line is the correlation between the two features. 

\begin{figure*}[htbp!]
  \centering
  \includegraphics[width=\linewidth]{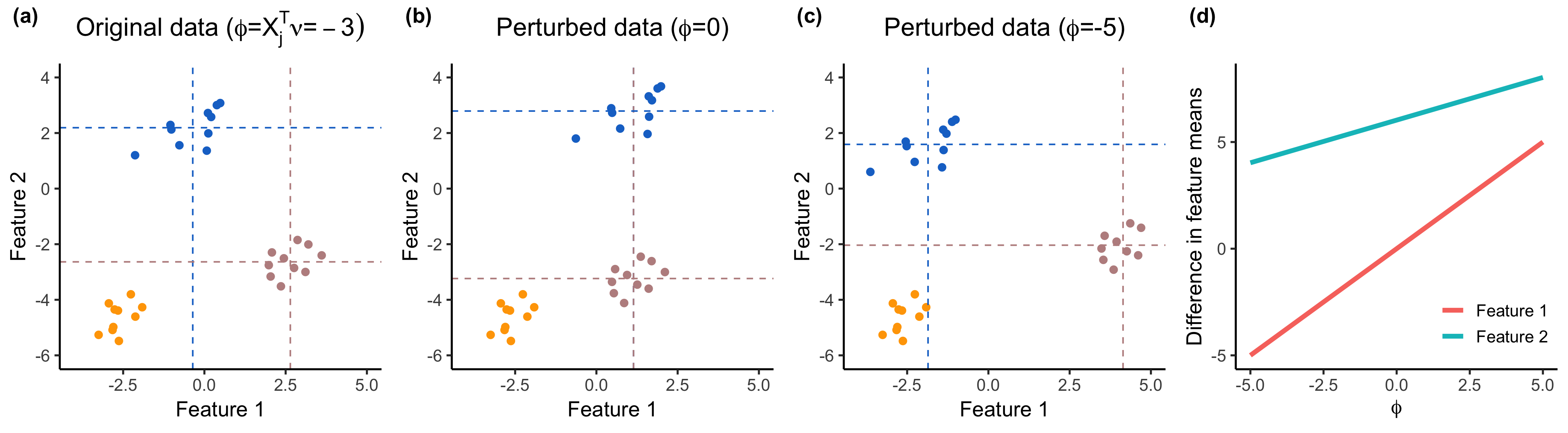}
\caption{One simulated data set generated from model \eqref{eq:mod} with ${\mu}_{i} = 1\qty{1\leq i\leq 10}[0,2.5]^T + 1\qty{11 \leq i \leq 20} [2.5, 0]^T + 1\qty{21\leq i\leq 30} [-2.5,-2.5]^T$ and $\Sigma = 0.2\cdot[1, 0.4; 0.4 ,1]$. \textit{(a):} The original data ${x}$ corresponds to $\phi = -3$. Applying $k$-means clustering with $K=3$ yields three clusters (rosy brown, blue, and orange). Here, $\hat{\nu}$ is chosen to test for a difference in means between $\hat{G}$ (blue) and $\hat{G}'$ (rosy brown). Empirical means for features 1 and 2 are displayed in dashed lines for $G$ and $G'$.  \textit{(b):} Perturbed data ${x}'(\phi, 1)$ at $\phi = 0$ results in no empirical mean difference for the first feature between the blue and rosy brown clusters. \textit{(c): } With ${x}'(\phi, 1)$ at $\phi = -5$, the mean difference for the first feature becomes more pronounced. \textit{(d):} The empirical difference in features 1 (red line) and 2 (blue line) as a function of $\phi$. The slope of the red line is 1 by the definition of $x'(\phi, j)$ and the slope of the blue line is $\Sigma_{12}/\Sigma_{11}$, where $\Sigma$ is the covariance matrix of the features.}
\label{fig:x_phi}
\end{figure*}
% In \eqref{eq:xphi}, ${x}'(\phi)$ results from applying a perturbation to the observed data ${x}$, along the direction of ${x}^\top \nu$, the difference between the two cluster centroids of interest. Figure~\ref{fig:x_phi} illustrates a realization of \eqref{eq:data_gen} for $k$-means clustering with $K=3$. The left panel displays the observed data ${x}$, which corresponds to ${x}'(\phi)$ with $\phi=\Vert {x}^\top \nu \Vert_2 =4.37$. Here, $\nu$ defined in \eqref{eq:nu_def} was chosen to test the difference between $\hat{\mathcal{C}}_1$ (shown in rosy brown) and $\hat{\mathcal{C}}_2$ (shown in blue). The center and right panels of Figure~\ref{fig:x_phi} display ${x}'(\phi)$ with $\phi = 0$ and $\phi = 6$, respectively. In the center panel, with $\phi=0$, the blue and rosy brown clusters are ``pushed together'', resulting in $\Vert {x}'(\phi)^\top \nu \Vert_2 = 0$; that is, there is no difference in empirical means between the two clusters under consideration. Applying $k$-means clustering no longer results in these clusters. By contrast, in the right panel, with $\phi=6$, the blue and rosy brown clusters are ``pulled apart'' along the direction of ${x}^\top \nu$, which results in an increased distance between the centroids of the blue and rosy brown clusters, and $k$-means clustering does yield the same clusters as on the original data. In this example, $\mathcal{S}_T=(3.59,\infty)$. 

% \textbf{TODO: yiqun will put in an interpreting p-values figure with perturbation AND the change plot ($\Delta_1$, $\Delta_2$ as a function of $\phi$)}

\subsection{Computing $\mathcal{\hat S}$ for hierarchical clustering}

We first review an important result from \citet{gao2022selective}. For any $w \in \mathbb{R}^q$, define the set
{
\begin{equation} 
\mathcal{S}(w) \equiv \left \{ \phi \in \mathbb{R}:\mathcal{C}\left( \left (I_n - \frac{\hat \nu \hat \nu^T}{\|\hat \nu\|_2^2} \right ) x  + \frac{ \phi \hat \nu w^T}{\|\hat \nu\|_2^2}   \right) = \mathcal{C}(x) \right \}.
\end{equation} 
}
\begin{theorem}[Gao et al. (2023)] 
\label{thm:hier}
Let $K > 1$ and consider applying hierarchical clustering to the squared Euclidean distance matrix and cutting the resulting dendrogram to get $K$ clusters. Then, for any realization $x$ from \eqref{eq:mod}, and any $w \in \mathbb{R}^q$, the set $\mathcal{S}(w)$ can be computed in at most $\mathcal{O}(n^2 + n \log(n))$ operations for single and average linkage, $\mathcal{O}(n^3 + n \log(n))$ operations for centroid linkage, and $\mathcal{O}(n^2 + n \log(n))$ operations for Ward linkage. 
\end{theorem}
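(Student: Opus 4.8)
The plan is to reduce the condition $\mathcal{C}(x'(\phi)) = \mathcal{C}(x)$ defining $\mathcal{S}(w)$ to a finite system of polynomial inequalities in the scalar $\phi$, solve each in closed form, and intersect. Write $x'(\phi)$ for the matrix appearing inside $\mathcal{S}(w)$. The one structural input I need is that each row $x'(\phi)_i$ is affine in $\phi$, so that every pairwise difference $x'(\phi)_i - x'(\phi)_{i'} \in \mathbb{R}^q$ is affine in $\phi$ and hence the squared Euclidean distance
\[ d_{ii'}(\phi) = \| x'(\phi)_i - x'(\phi)_{i'} \|_2^2 \]
is a quadratic polynomial in $\phi$ whose coefficients are computable in $\mathcal{O}(q)$ operations. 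Thus the entire dissimilarity matrix fed to hierarchical clustering is a matrix of quadratics in $\phi$, and this is what makes the rest of the computation tractable.

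Next I would track how these quadratics propagate through the $n-1$ agglomeration steps, since the partition $\mathcal{C}(x'(\phi))$ is a deterministic function of the realized sequence of merges. Using the Lance--Williams recurrence I would verify, by induction on the merge step, that every inter-cluster linkage dissimilarity remains a quadratic polynomial in $\phi$ for average, centroid, and Ward linkage (size-weighted combinations of quadratics are quadratic, and the per-step coefficients do not depend on $\phi$), and a pointwise minimum of pairwise quadratics for single linkage. Letting $(a_t, b_t)$ denote the pair merged at step $t$ on the observed $x$, the requirement that the same merge occur on $x'(\phi)$ is the family of inequalities $d_{\mathrm{link}}(a_t, b_t; \phi) \le d_{\mathrm{link}}(a', b'; \phi)$ over competing current pairs $(a', b')$. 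Each is a quadratic inequality in $\phi$ (for single linkage it reduces to finitely many comparisons of pairwise quadratics), so its solution set is a union of at most two intervals obtainable from the quadratic formula in $\mathcal{O}(1)$ time. Intersecting these interval sets across the merges that determine the $K$-cluster partition exhibits $\mathcal{S}(w)$ as a finite union of intervals.

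What remains is to count comparisons and thereby the running time for each linkage. For single, average, and Ward linkage I would invoke reducibility (the absence of dendrogram inversions), which lets a nearest-neighbor-chain agglomeration produce the clustering while examining only $\mathcal{O}(n^2)$ candidate dissimilarities in total; each examined comparison contributes one $\mathcal{O}(1)$ quadratic constraint, and ordering the resulting critical values of $\phi$ to assemble $\mathcal{S}(w)$ as a sorted union of intervals adds $\mathcal{O}(n \log n)$, giving $\mathcal{O}(n^2 + n\log n)$. Centroid linkage is not reducible, so the nearest-neighbor-chain shortcut is unavailable and one falls back on the naive scheme that scans $\mathcal{O}(n^2)$ pairs at each of $\mathcal{O}(n)$ steps, yielding $\mathcal{O}(n^3 + n\log n)$.

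I expect the main obstacle to be the linkage-by-linkage bookkeeping rather than the distance computation of the first step. The delicate points are (i) confirming that the size-weighted Ward and centroid Lance--Williams updates keep the dissimilarities genuinely quadratic in $\phi$; (ii) handling single linkage, whose minimum-of-quadratics dissimilarity is only piecewise quadratic, by reducing each merge comparison to a direct comparison of two pairwise quadratics (equivalently, via the minimum-spanning-tree view of single linkage) so that it stays within the $\mathcal{O}(n^2)$ budget; and (iii) justifying the separate, slower treatment of centroid linkage, whose non-monotone merge heights both break the nearest-neighbor-chain argument and can create additional sign changes, so that competing pairs cannot be pruned as for the reducible linkages. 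A subtlety I would flag is that the cleanest description of the computed set is through the realized sequence of merges, which determines but is finer than the $K$-cluster partition; one must therefore verify that this matches the partition-level definition of $\mathcal{S}(w)$, noting that within the selective-inference framework conditioning on the finer merge information only conditions on more and remains valid, as in \citet{gao2022selective}.
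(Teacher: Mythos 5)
Your skeleton --- rows of $x'(\phi)$ affine in $\phi$, pairwise squared distances therefore quadratic in $\phi$, Lance--Williams propagation keeping the average/centroid/Ward dissimilarities quadratic (and single linkage a minimum of quadratics), ``winning merge'' inequalities solved in closed form and intersected, with reducibility separating centroid from the other linkages --- is the same skeleton as the Gao et al.\ argument that the paper cites for Theorem~\ref{thm:hier}. The serious gap lies in how you dispose of the mismatch you yourself flag at the end. Theorem~\ref{thm:hier} is a purely computational claim about $\mathcal{S}(w)=\{\phi:\mathcal{C}(x'(\phi))=\mathcal{C}(x)\}$, where $\mathcal{C}$ returns the $K$-cluster \emph{partition}; your inequalities characterize the set of $\phi$ reproducing the observed \emph{ordered merge sequence}, which a priori may be a proper subset (the same partition can be reached by merges performed in a different order). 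Appealing to ``conditioning on more information remains valid'' does not close this gap: that is a statement about the statistical validity of a different procedure with a different truncation set, whereas the theorem (and the $p$-value in \eqref{eq:selective}, through \eqref{eq:defS}) concerns $\mathcal{S}(w)$ itself. The missing ingredient is the lemma that the two sets are \emph{equal} in this setting. Because $\hat\nu$ is constant on $\hat G$, constant on $\hat G'$, and zero elsewhere, any pair $i,i'$ with $[\hat\nu]_i=[\hat\nu]_{i'}$ satisfies $x'(\phi)_i-x'(\phi)_{i'}=x_i-x_{i'}$ for every $\phi$, so all dissimilarities among pairs lying in a common block of $\mathcal{C}(x)$ (and among the untested clusters) are invariant in $\phi$. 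An induction on merge steps then shows that any $\phi$ whose first $n-K$ merges stay within blocks must execute exactly the observed merges in the observed order, since at each step the winner must be the minimizer over within-block pairs, and those dissimilarities do not move with $\phi$. Without this equality, you have proved a statement about a possibly smaller set, not the theorem.

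The second gap is your complexity count for single, average, and Ward linkage. You charge the $\mathcal{O}(n^2)$ bound to the nearest-neighbor-chain algorithm, asserting that each dissimilarity it examines contributes one constraint. This conflates the cost of \emph{running} the clustering with the number of inequalities needed to \emph{characterize} the preservation set: the set of $\phi$ on which an NN-chain run makes identical decisions is yet another event, and you would owe a separate proof that it coincides with $\mathcal{S}(w)$. The argument that actually delivers $\mathcal{O}(n^2)$ inequalities is pruning by monotonicity of merge heights: writing $\mathcal{W}_l$ for the pair merged at step $l$ and $d(\,\cdot\,;\phi)$ for the linkage dissimilarity computed on $x'(\phi)$, reducibility forces $d(\mathcal{W}_1;\phi)\le d(\mathcal{W}_2;\phi)\le\cdots$ once the earlier merges are pinned down, so for each pair of clusters that ever coexists during the run only the comparison at the \emph{last} step of coexistence must be retained, and the number of coexisting cluster pairs is $\mathcal{O}(n^2)$. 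It is precisely this pruning that inversions destroy for centroid linkage, which is the correct source of the $\mathcal{O}(n^3)$ there; your diagnosis of centroid is right, but the $\mathcal{O}(n^2)$ side needs the monotonicity argument rather than the NN-chain device. A final bookkeeping slip: computing each pairwise cross-term coefficient in $\mathcal{O}(q)$ costs $\mathcal{O}(n^2q)$ overall, which exceeds the stated bounds; to match them, precompute $xw\in\mathbb{R}^n$ once in $\mathcal{O}(nq)$ operations and then read each coefficient in $\mathcal{O}(1)$ from it and the already-available squared Euclidean distance matrix.
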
 
Theorem \ref{thm:hier} is a direct generalization of results in Section 3 of \citet{gao2022selective}. In short, Section 3.2 of \citet{gao2022selective} shows that $\mathcal{S}(w)$ is the intersection of $\mathcal{O}(n^2)$ sets, where $n$ is the number of observations.  Sections 3.3 and 3.4 of \citet{gao2022selective} further reveal that for average, Ward, centroid, and single linkage, each of the $\mathcal{O}(n^2)$ intersected sets are the solution sets to a quadratic inequality in $\phi$. Observing that we can take the intersection of the solution sets of $N$ quadratic inequalities in $\mathcal{O}(N \log N)$ operations, and carefully analyzing the number of operations needed to compute the coefficients of the quadratic inequalities using the squared Euclidean distance matrix, leads to the worst-case time complexities listed in Theorem \ref{thm:hier}. 

Since $\mathcal{\hat S}$  in \eqref{eq:defS} can be written as
$\mathcal{\hat S} = \mathcal{S}(\hat \nu, \Sigma_j/\Sigma_{jj})$;
it follows from Theorem \ref{thm:hier} that computing $\mathcal{\hat S}$ for hierarchical clustering requires the worst-case time complexities listed in Theorem \ref{thm:hier}. 
\subsection{Extensions to $k$-means clustering}
% \label{sec:kmeans}

In this section, we extend the proposed $p$-value \eqref{eq:selective} to the $k$-means clustering algorithm and outline an efficient computational recipe. Because $k$-means clustering iteratively updates the cluster assignment~\citep{Lloyd2006-io}, characterizing $\{X:\mathcal{C}(X) =  \mathcal{C}(x)\}$, where $\mathcal{C}$ denotes the final clusters at convergence, may require enumerating possibly an exponential number of intermediate cluster assignments. Hence, following~\citet{chen2022selective}, we condition on \emph{all of the intermediate clusters} in the $k$-means clustering algorithm to arrive at this extension of the $p$-value in \eqref{eq:selective}:
{\begin{align} 
p_{j,\text{k-means}} 
&= \mathbb{P}_{{\hat{H}_{0j}}} \Bigg (  \big |[X^T \hat\nu]_j \big | \geq \big |[x^T \hat\nu]_j  \big | ~ \Bigg | ~ U(X) = U(x), \bigcap_{t=0}^{T}\bigcap_{i=1}^n\big\{ c_i^{(t)}(X) = c_i^{(t)}(x)\big\}  \Bigg  ),\label{eq:kmenas-selective} 
\end{align}}
where $c_i^{(t)}(\cdot)$ is the assigned cluster of observation $i$ at the $t$th iteration of the $k$-means algorithm, $U(\cdot)$ is defined in \eqref{eq:defU}, and $T$ is the total number of iterations run during $k$-means clustering. 

As in Section \ref{sec:select}, conditioning on additional information in \eqref{eq:kmenas-selective} still guarantees selective type I error control, and $p_{j,\text{k-means}}$ can be computed using a univariate truncated Gaussian distribution; this is formalized in Proposition~A1 in the Appendix.  Regarding computation, we can extend the ideas in Section 3 of \citet{chen2022selective} to efficiently compute the conditioning set in \eqref{eq:kmenas-selective}. The key idea is that we can recast the computation to solving $O(nTK)$ number of quadratic inequalities in $\phi$ and intersecting the resulting solution sets, taking $\mathcal{O}((n+q)KT+nKT\log(nKT))$ operations in total (see details in the Appendix).

% \begin{theorem} 
% \label{thm:truncnorm_kmeans}
% Suppose that $x$ is a realization from \eqref{eq:mod}, and $\mathbb{F}(t; \mu, \sigma, \mathcal{S})$ denotes the cumulative distribution function (CDF) of a $N(\mu, \sigma^2)$ random variable truncated to the set $\mathcal{S}$. If $H_{0j}(G,G')$ in \eqref{eq:hyp-nu} holds, for any $j \in \{1, 2, \ldots, q\}$, we have that  
% {\begin{align} 
% p_{\text{k-means},j} &= 1 - \mathbb{F} \left (\big | [\nu^T x]_j \big | ; 0, \Sigma_{jj}\|\nu\|_2^2;\mathcal{S}_j\Big(x; \bigcap_{t=1}^T C^{(t)}(x)\Big) \right ) \nonumber \\
% &+ \mathbb{F}\left (-\big | [\nu^T x]_j \big | ; 0, \Sigma_{jj} \|\nu\|_2^2; \mathcal{S}_j\Big(x; \bigcap_{t=1}^T C^{(t)}(x)\Big) \right ).
% \end{align} }
% Here, $C^{(t)}(x) = (c_{1}^{(t)},\ldots, c_{n}^{(1)})$ is the estimated cluster at the $t$th iteration of the $k$-means algorithm, and 
% {\small
% \begin{equation} 
% \mathcal{S}_j\Big(x; \bigcap_{t=1}^T C^{(t)}(x)\Big)  = 
% \left \{ \phi \in \mathbb{R}: \bigcap_{t=1}^T \left \{ \mathcal{C}^{(t)}\left(x + (\phi - ( \bar{x}_{Gj} - \bar{x}_{G'j})) \left ( \frac{ \nu }{ \|\nu\|_2^2 } \right ) \left ( \frac{\Sigma_j}{\Sigma_{jj}} \right )^T \right ) =  \mathcal{C}^{(t)}(x)\right \} \right \}. \label{eq:defS_kmeans}
% \end{equation} 
% }
% Furthermore, rejecting $H_{0j}(G, G')$ in \eqref{eq:hyp} whenever $p_{\text{k-means},j} \leq \alpha$ controls the selective type I error rate in \eqref{eq:selective} at level $\alpha$. 
% \end{theorem}  

%!TEX root = testing_sf.tex
\section{Simulation study}
\label{sec:simulation}

\subsection{Overview}
Throughout this section, we consider testing the null hypothesis $\hat H_{0j}: \bar{\mu}_{\hat G j } =  \bar{\mu}_{\hat G' j } \text{ versus } \hat H_{1j}: \bar{\mu}_{\hat G j } \neq  \bar{\mu}_{\hat G' j }$, where, unless otherwise stated, $\hat G$ and $\hat G'$ are a randomly-chosen pair of clusters from $k$-means or hierarchical clustering, and $j$ is a randomly-chosen feature. We consider the following $p$-values: $p_{j, \text{naive}}$ in \eqref{eq:naive}, $p_{j, \text{k-means}}$ in \eqref{eq:kmenas-selective}, $p_{j, \text{average}}$, $p_{j, \text{centroid}}$, and  $p_{j, \text{single}}$ (defined in \eqref{eq:selective} where $\mathcal{C}$ is hierarchical clustering with average, centroid, and single linkage, respectively). 
% remove for conference space
% In the simulations that follow, we compare the selective Type I error \eqref{eq:selective_type_1} and power of the tests that reject $\hat H_{0j}$ when these $p$-values are less than $\alpha=0.05$.

\subsection{Selective Type I error rate}

% To illustrate this in an example, we simulate data from \eqref{eq:mod} with $n=150,q=10$, and 
We generate data from \eqref{eq:mod} with $
{\mu}_i = (1 , 0_{q-1})^T \text{ for } i \leq 50; ~ {\mu}_i = ( 0_{q-1} ~ 1)^T \text{ for } i > 50 ;  \text{ and } \Sigma_{ij} = 1\{i=j\}+\rho\cdot 1\{i\neq j\}$; therefore, $\hat{H}_{0j}$ holds for any pair of estimated cluster and any feature $j = 2,\ldots, q-1$. We simulated 1,500 data sets with $q=10$ and $\rho = 0,0.4,0.8$.

For each simulated data set, we apply $k$-means clustering and hierarchical clustering with average, centroid, and single linkage to estimate three clusters. We then compute $p_{j, \text{naive}}$ (based on the output from $k$-means clustering), $p_{j, \text{k-means}}$, $p_{j, \text{average}}$, $p_{j, \text{centroid}}$, and $p_{j, \text{single}}$ for a randomly-chosen pair of clusters and a random feature between 2 and $q-1$. 

Figure~\ref{fig:sim_type_1} displays the observed $p$-value quantiles versus the Uniform(0,1) quantiles. We see that for all values of $q$ and $\rho$, (i) the naive $p$-values in \eqref{eq:naive} are stochastically smaller than a Uniform(0,1) random variable, suggesting that the test based on $p_{j,\text{naive}}$ leads to an inflated Type I error rate (the number of false rejections increases as the underlying feature correlation increases); (ii) tests based on $p_{j, \text{k-means}}$, $p_{j, \text{average}}$, $p_{j, \text{centroid}}$, and $p_{j, \text{single}}$ control the selective Type I error rate in the sense of \eqref{eq:selective_type_1}.

\begin{figure}[htbp!]
  \centering
\includegraphics[width=0.85\linewidth]{./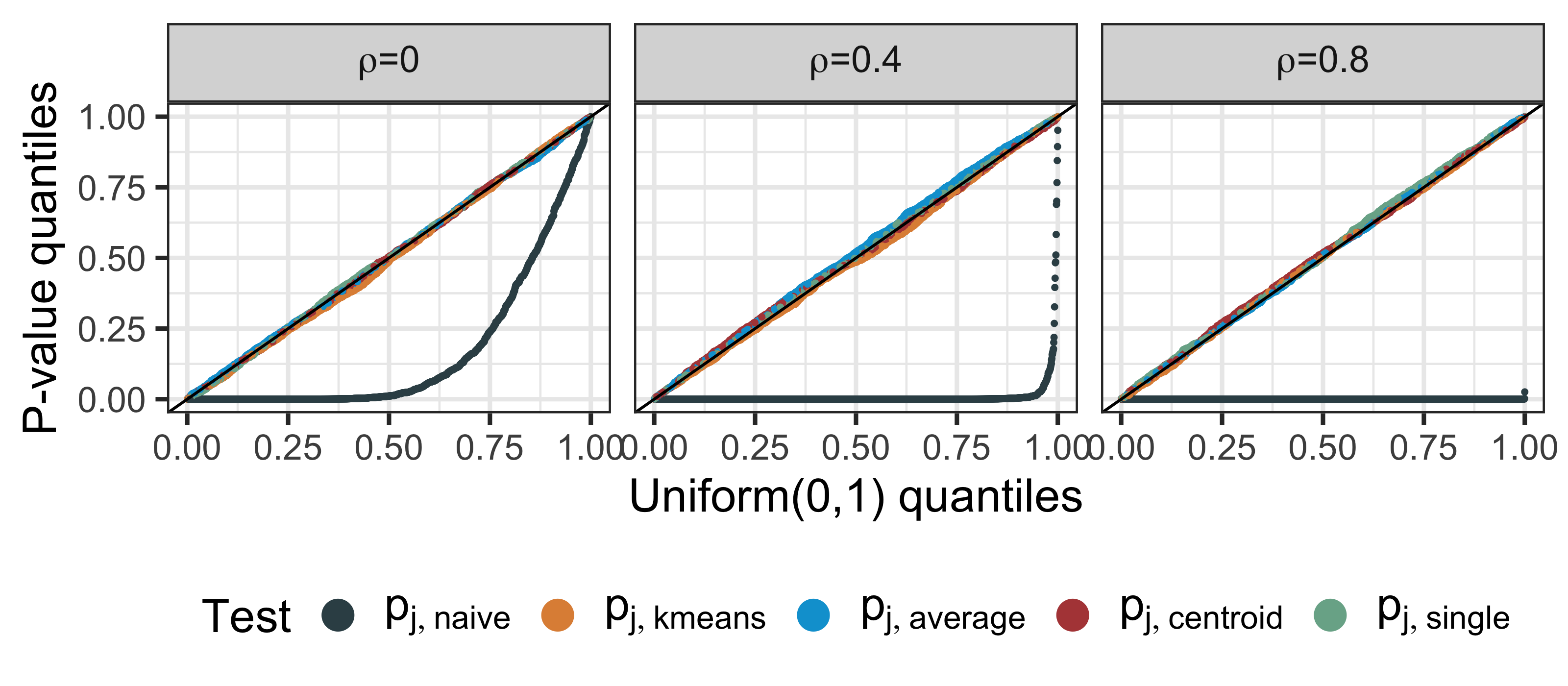}
\caption{Quantile-quantile plots for $p_{j, \text{naive}}$, $p_{j, \text{k-means}}$, $p_{j, \text{average}}$, $p_{j, \text{centroid}}$, and $p_{j, \text{single}}$ under \eqref{eq:mod} when $\hat{H}_{0j}$ holds.}
\label{fig:sim_type_1}
\end{figure}

\subsection{Conditional power and detection probability}
\label{subsec:power}
In this section, we demonstrate that tests based on our proposed $p$-values in Section~\ref{sec:select} have substantial power to reject $\hat{H}_{0j}$ when it does not hold. We generate data from \eqref{eq:mod} with three equally sized ``true clusters", ${G}_1 = \{1,\ldots,50\},{G}_2 = \{51,\ldots,100\},$ and ${G}_3 = \{101,\ldots,150\}$:
{\begin{align}
\begin{split}
\label{eq:power_model}
&{\mu}_i = (0_{\lfloor q/2 \rfloor} , -\delta_{\lceil q/2 \rceil})^T \text{ for } i \leq 50; \\ 
& {\mu}_i = 0_{q} \text{ for } 50 < i < 100; \\
&{\mu}_i  = (0_{\lfloor q/2 \rfloor} , \delta_{\lceil q/2 \rceil})^T  \text{ for } i \geq 100; \text{ and } \\ 
&\Sigma_{ij} = 1\{i=j\}+\rho\cdot 1\{i\neq j\}.
\end{split}
\end{align}}
 We simulated 2,000 data sets for each combination of $q=10$, $\rho = 0,0.4,0.8$, and $\delta=3,\ldots,8$. 

For each simulated data set, we computed $p_{j, \text{k-means}}$, $p_{j, \text{average}}$, $p_{j, \text{centroid}}$, and $p_{j, \text{single}}$ for a randomly-chosen pair of clusters and rejected $\hat{H}_{0j}: \bar{\mu}_{\hat G j } =  \bar{\mu}_{\hat G' j }$ if these $p$-values were less than $\alpha = 0.05$. Note that different clustering methods may estimate different clusters in a single data set, leading to different null hypotheses. Thus, our analysis evaluates both the \emph{conditional power} of the tests and the \emph{detection probability} of the employed clustering methods~\citep{gao2022selective,chen2022selective,Jewell2022-fn,Hyun2021-xm}. We define the conditional power as the probability of rejecting $\hat{H}_{0j}$ in \eqref{eq:hyp} \emph{given that} $\hat{G}$ and $\hat{G}'$ are true clusters. Given $M$ simulated data sets with true clusters $\qty{\mathcal{G}_1,\ldots,  \mathcal{G}_L}$, we estimate it as:
{\small
\begin{align}
% &\text{Conditional power}
 \frac{\sum_{m=1}^M 1\qty{ \qty{\hat{{G}}^{(m)},\hat{{G}}^{'(m)}} \subseteq \qty{{G}_1,\ldots,  {G}_L }, p^{(m)}\leq \alpha}}{\sum_{m=1}^M 1\qty{ \qty{\hat{{G}}^{(m)}_1,\hat{{G}}^{'(m)}_2} \subseteq \qty{{G}_1,\ldots,  {G}_L } }},  \label{eq:conditional_power} 
\end{align}
}
where $p^{(m)}$ and $\hat{{G}}^{(m)}, \hat{{G}}^{'(m)}$ denote the $p$-value and estimated clusters under consideration for the $m$th simulated data set. Because the quantity in \eqref{eq:conditional_power} conditions on the event that $\hat{{G}}_1$ and $\hat{{G}}_2$ are true clusters, we also estimate how often that event occurs, i.e., the \emph{detection probability}:
{\small\begin{align}
% &\text{Detection probability} = 
\label{eq:detect_p} 
&\sum_{m=1}^M 1\qty{ \qty{\hat{{G}}^{(m)}_1,\hat{{G}}^{(m)}_2} \subseteq \qty{{G}_1,\ldots,  {G}_L } }/M. 
\end{align}}

Figures~\ref{fig:sim_power} displays the conditional power \eqref{eq:conditional_power} for the tests based on $p_{j, \text{k-means}}$, $p_{j, \text{average}}$, $p_{j, \text{centroid}}$, or $p_{j, \text{single}}$. In cases where simulations did not recover the true clusters, we've conventionally set the conditional power to zero.  Under model \eqref{eq:mod} with $\mu$ defined in \eqref{eq:power_model}, the conditional power increases as a function of the difference in feature means $\delta$ across all proposed $p$-values and feature correlation $\rho$. For a given $q$, a larger value of $\rho$ leads to lower conditional power, especially for the test based on $p_{j, \text{k-means}}$. Moreover, for a given value of $\delta$ and $q$, the ordering of power for different tests depends on the correlation between features: with independent features (left column of Figure~\ref{eq:conditional_power}),  the test based $p_{j, \text{average}}$ and  $p_{j, \text{k-means}}$. By contrast, when features are highly correlated (right column of Figure~\ref{eq:conditional_power}), the tests based on $p_{j, \text{single}}$ and $p_{j, \text{k-means}}$ yield the highest and the lowest power, respectively. 

The observed trends are congruent with the anticipated behaviour of individual clustering algorithms: For instance, $k$-means clustering, which uses within-cluster-sum-of-squares, tends to underperform when features are highly correlated. By contrast, single linkage hierarchical clustering, making use of the minimal distance between clusters, thrives in settings with high signal-to-noise ratio and high feature correlation. Figure~\ref{fig:detection_prob} displays the relative performance of cluster recovery, characterized using detection probability \eqref{eq:detect_p}.
 
\begin{figure}[htbp!]
  \centering  \includegraphics[width=\linewidth]{./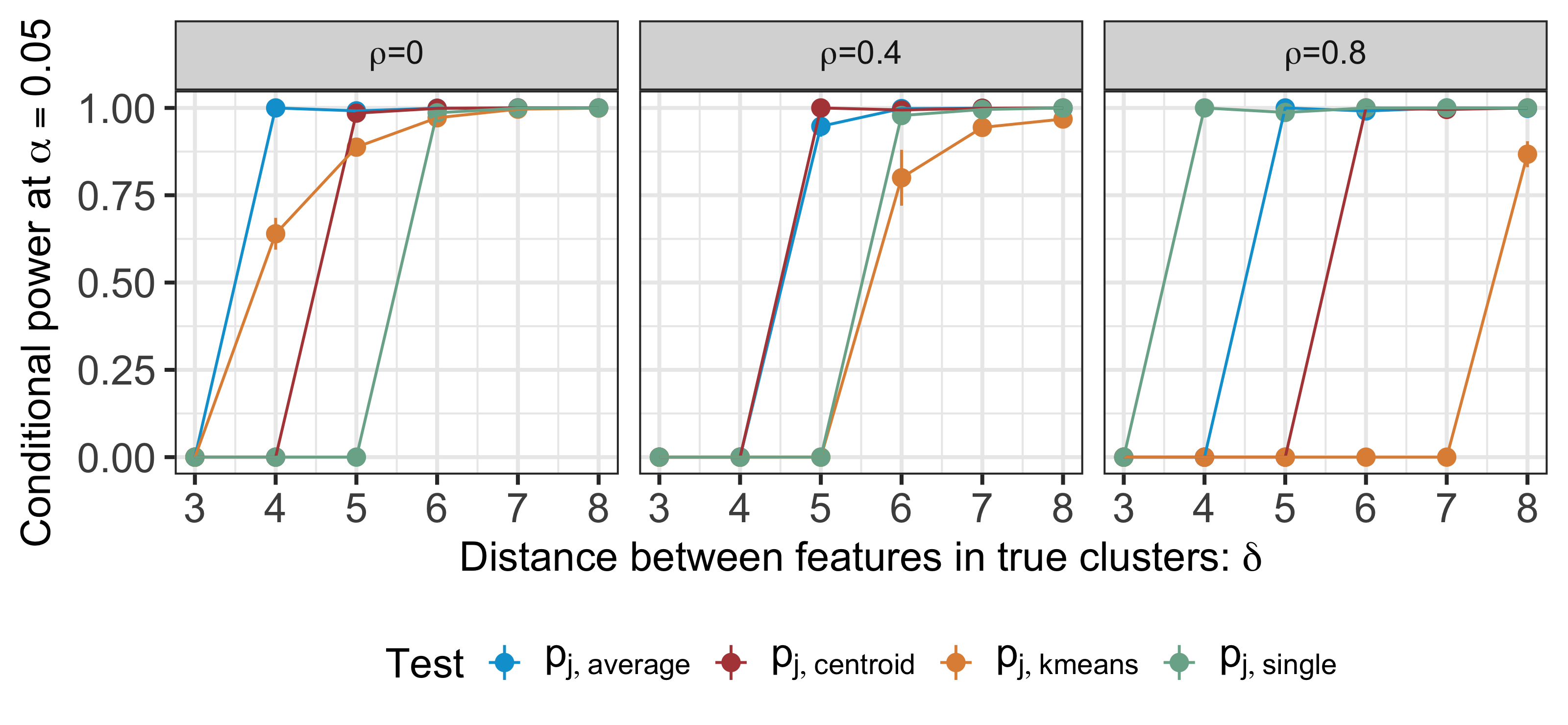}
\caption{The conditional power \eqref{eq:conditional_power} at $\alpha=0.05$ for the tests based on $p_{j, \text{k-means}}$, $p_{j, \text{average}}$, $p_{j, \text{centroid}}$, and $p_{j, \text{single}}$, under model \eqref{eq:mod} with $\mu$ defined in \eqref{eq:power_model}; $q=10$; and $\rho=0,0.4,0.8$.}
\label{fig:sim_power}
\end{figure}

\begin{figure}[htbp!]
  \centering
\includegraphics[width=1\linewidth]{./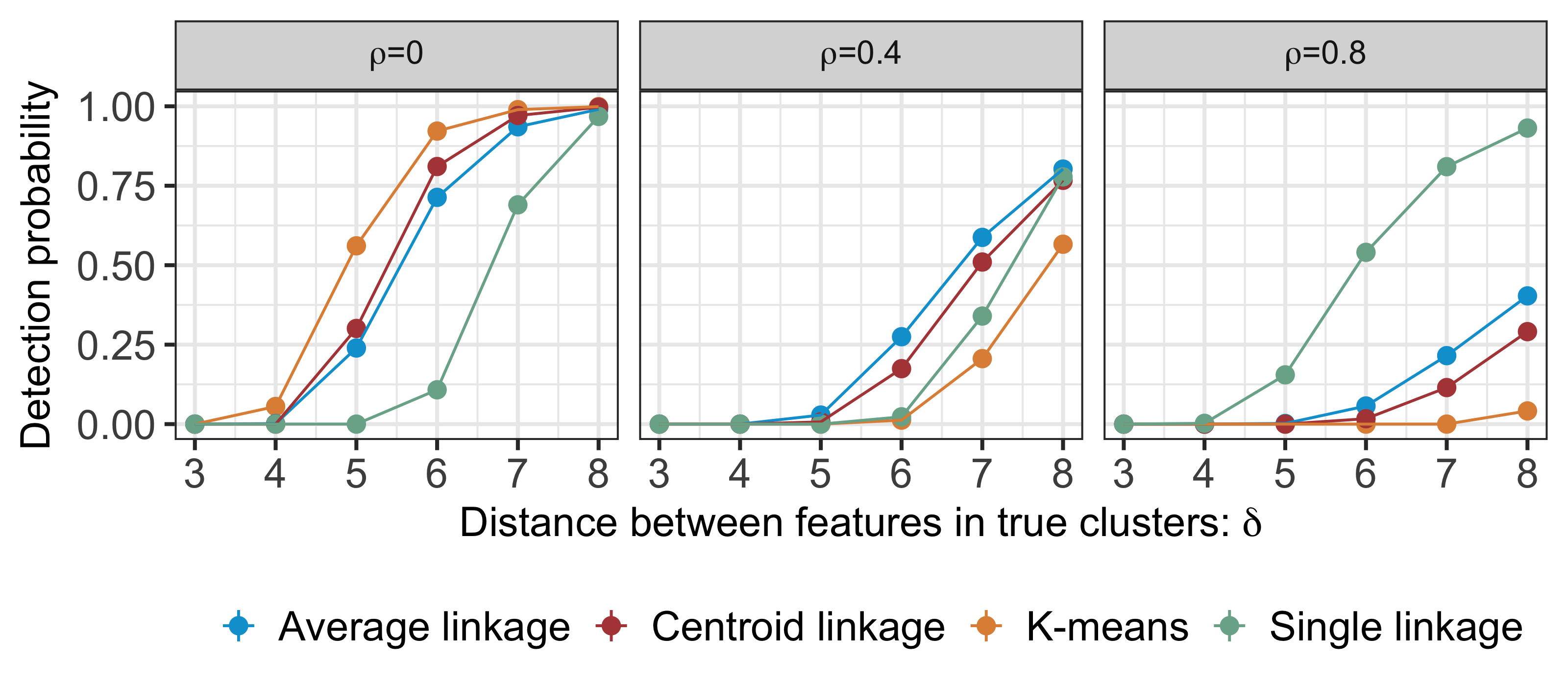}
\caption{The detection probability \eqref{eq:detect_p} at $\alpha=0.05$ of the underlying clustering methods for $p_{j, \text{k-means}}$, $p_{j, \text{average}}$, $p_{j, \text{centroid}}$, and $p_{j, \text{single}}$, under model \eqref{eq:mod} with $\mu$ defined in \eqref{eq:power_model}; $q=10$; and $\rho=0,0.4,0.8$.}
\label{fig:detection_prob}
\end{figure}
%!TEX root = testing_sf.tex
\section{Applications to scRNA-seq data}
\label{sec:application}

In this section, we apply the proposed $p$-values to single-cell RNA-sequencing data collected by the Tabula Sapiens Consortium~\citep{tabula2022tabula}, which measures messenger RNA expression levels in each of 500,000 cells from 24 different tissues and organs. These data have enabled new insights into the distinct cell types within the human organism and created a detailed molecular definition of these cell types. To reveal biological insights on how gene expression levels change across cell types, biologists typically perform clustering on the cells, and then perform a differential expression analysis, i.e., they test for a difference in gene expression between two clusters~\citep{aizarani2019human,grun2015single}. In this approach, ignoring the fact that the clusters were estimated from the same data used for testing, e.g., applying a two-sample $Z$-test instead of a selective test in the differential expression analysis, will inflate the Type I error rate. 

\begin{figure}[htbp!]
  \centering
\includegraphics[width=1\linewidth]{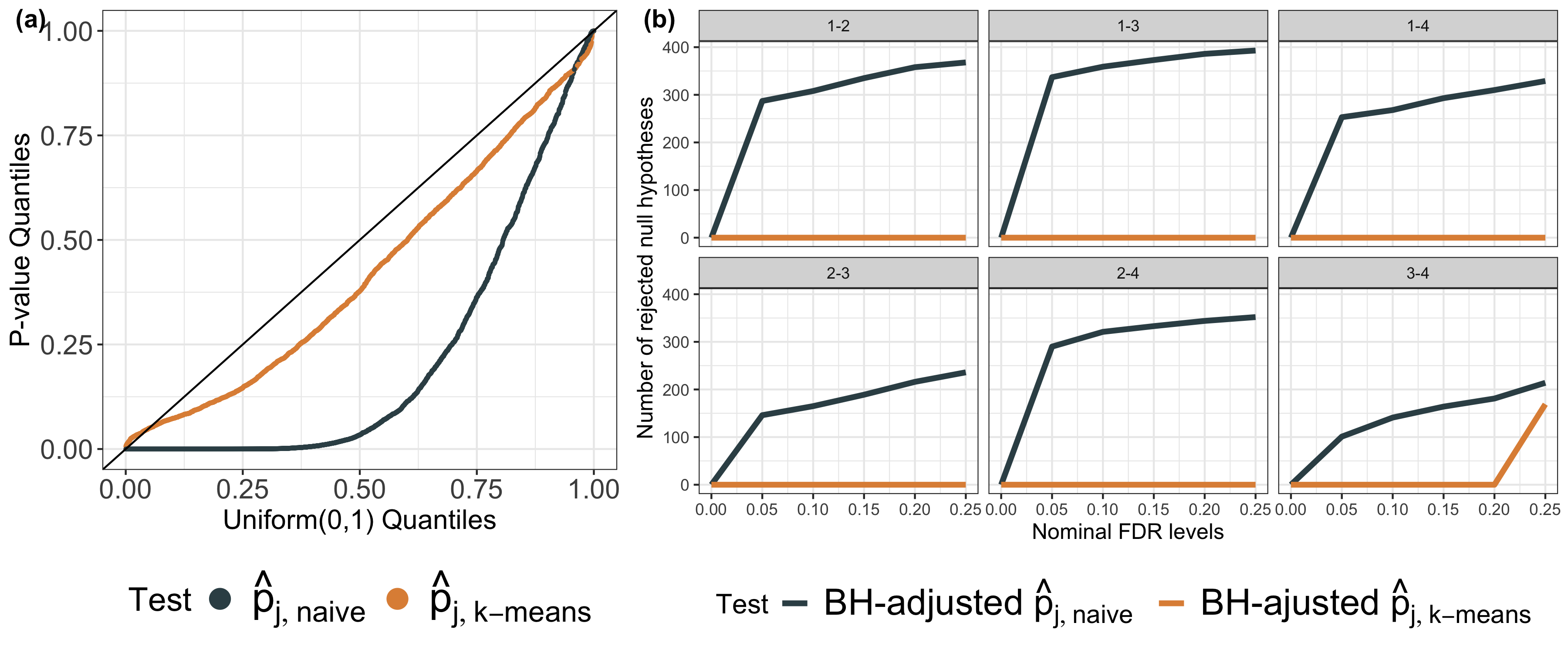}
\caption{\textit{(a): }Quantile-quantile plot of the $p$-values $\hat{p}_{j, \text{naive}}$ and $\hat{p}_{j, \text{k-means}}$ aggregated over features $j=1,\ldots, 500$ and all pairs of estimated clusters on the ``no cluster'' data set. \textit{(b): }Number of rejected null hypotheses at different nominal FDR levels using BH-procedure-adjusted $p$-values from \textit{(a)}.}
\label{fig:real_data_single_cell}
\end{figure}

One unique feature of the Consortium data set is that experts annotated cell types consistently across the different tissues. We will make use of the labelled cell types as the ``ground truth'' and use this information to demonstrate that our proposed $p$-values in Section~\ref{sec:select} yield biologically reasonable results. As per standard pre-processing techniques~\citep{Duo2018-ap}, we first excluded cells with low numbers or total counts of expressed genes, as well as cells in which a large percentage of the expressed genes are mitochondrial. We then normalized the transcripts for each cell by the total sum of counts in that cell, followed by a $\log_2(x+1)$ transformation. 

\begin{figure}[htbp!]
  \centering
\includegraphics[width=\linewidth]{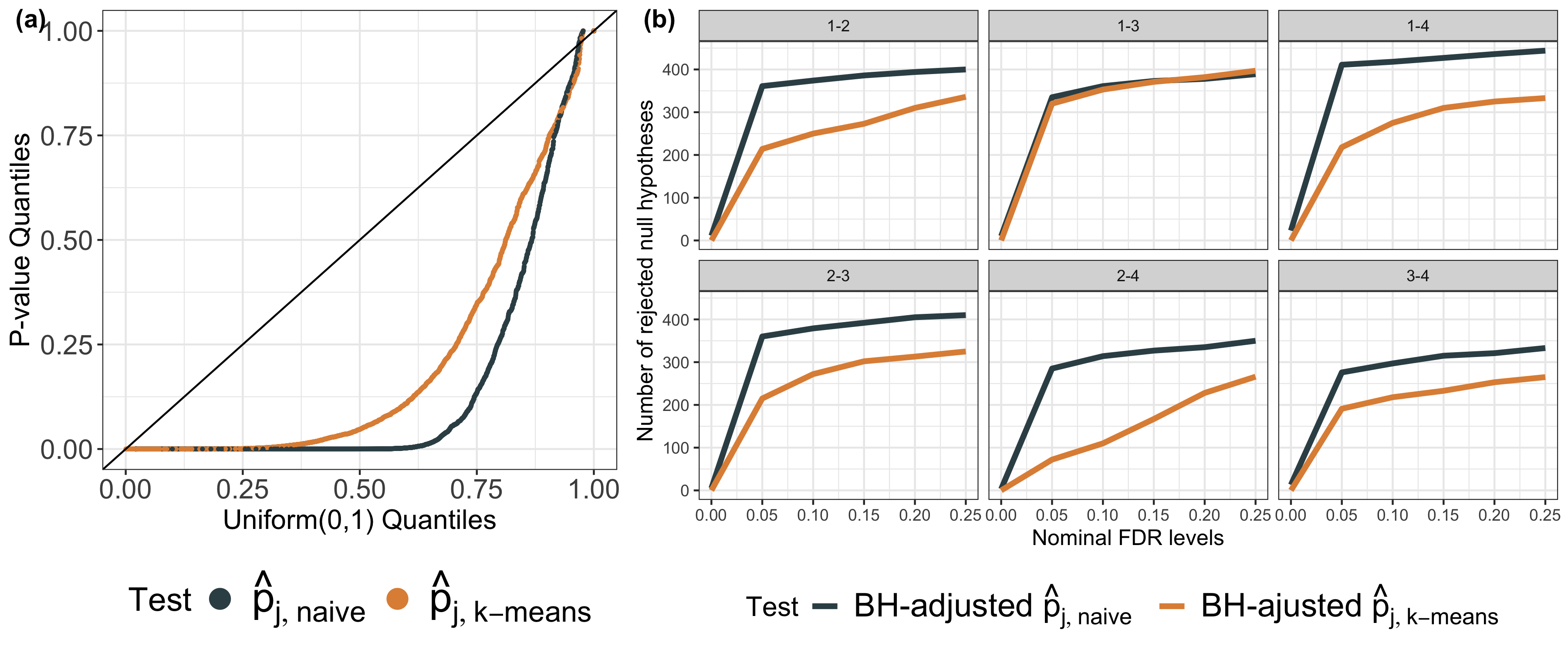}
\caption{\textit{(a): }Quantile-quantile plot of the $p$-values $\hat{p}_{j, \text{naive}}$ and $\hat{p}_{j, \text{k-means}}$ aggregated over features $j=1,\ldots, 500$ and all pairs of estimated clusters on the ``cluster'' data set. \textit{(b): }Number of rejected null hypotheses at different nominal FDR levels using BH-adjusted $p$-values from \textit{(a)}.}
\label{fig:real_data_single_cell_H1}
\end{figure}

We applied the pre-processing pipeline separately on two sets of cells collected from the same donor: the CD4-positive, alpha-beta T cells, and a combined sample of four cellular types from an identical donor, namely memory B cells, natural killer cells, macrophages, and monocytes. We considered only the subset of 500 genes with the largest sample variance in expression levels post-normalization.

% yiqun - gpt polish + edits here
To investigate the selective type I error rate in the absence of true clusters, we first consider a ``no cluster'' data set consisting of only CD4-positive, alpha-beta T cells after pre-processing (thus, $n=833$ and $q = 500$). We applied $k$-means clustering with $K=4$ to obtain four estimated clusters. For each pair of estimated clusters and each feature $j=1,\ldots, 500$, we computed the $p$-values $\hat{p}_{j, \text{naive}}$ and $\hat{p}_{j, \text{k-means}}$ (where the $\hat{p}$ emphasizes that we used the sample covariance matrix as an estimate of $\Sigma$ in \eqref{eq:mod}). The quantile-quantile plot of the resulting $p$-values is displayed in Figure~\ref{fig:real_data_single_cell}(a). We display the number of rejected hypotheses after FDR correction using the BH procedure~\citep{Benjamini1995-lo} against the nominal FDR level in Figure~\ref{fig:real_data_single_cell}(b). In this data set, the naive $p$-values are extremely small for all pairs of estimated clusters, leading to hundreds of rejected null hypotheses of equal means, while our proposed $p$-values are quite large and lead to virtually no rejections after FDR correction. In particular, at $\alpha=0.05$ and FDR level of $0.20$, the test based on  $\hat{p}_{j, \text{naive}}$ would conclude that more than 60\% genes are ``differentially expressed'', whereas our approach would suggest that expression levels across clusters are the same for most genes. Because this ``no cluster'' data set consists only of a single type of expert-annotated cell from a single donor, we believe the conclusion based on $\hat{p}_{j, \text{k-means}}$ aligns better with the underlying biology.

Next, we turn our attention to the ``cluster'' data set, which encompasses memory B cells, natural killer cells, macrophages, and monocytes. We applied $k$-means clustering to obtain four clusters, and subsequently estimated a covariance matrix based on the residuals from the $k$-means fit. Notably, the clusters derived in this manner align closely with the four distinct cell types, with an adjusted Rand Index between the cell type annotations and estimated cluster memberships exceeding 0.6. We then computed the $p$-values $\hat{p}_{j, \text{naive}}$ and $\hat{p}_{j, \text{k-means}}$ across all features and for each pair of the estimated clusters. The quantiles of these $p$-values, as well as the number of null hypotheses rejected following FDR adjustment, are depicted in panels (a) and (b) of Figure~\ref{fig:real_data_single_cell_H1}, respectively. Notably, both sets of $p$-values on this data set are quite small, and the BH procedure results in a comparable count of rejections for both sets of $p$-values. Given that the clusters in this context largely correspond to distinct cell types, our results suggest that the test employing our proposed $p$-value has reasonable power to reject the null hypothesis when it does not hold.

%!TEX root = testing_sf.tex
\section{Discussion}
\label{sec:disucssion}

In this work, we proposed a test for a difference in means for a single feature between two clusters estimated from hierarchical or $k$-means clustering, under \eqref{eq:mod}. Here, we outline several future research directions.

The $p$-values introduced in Section~\ref{sec:select} can be extended to test for a difference in means between \emph{groups} of features. For instance, if we want to test for equality in means for all features $j\in J$ between two estimated clusters, i.e., $
\hat H_{0J}: \bigcap_{j\in J}\big\{\bar{\mu}_{\hat G j } =  \bar{\mu}_{\hat G' j }\big\}$. Following the argument in this paper, the following $p$-value 
{\footnotesize$\mathbb{P}_{\hat H_{0J}} \Big (  \big \Vert[X^T \hat \nu]_{J} \big\Vert_2 \big | \geq \big \Vert[x^T \hat \nu]_{J} \big\Vert_2 \big | ~ \Bigg |~ \mathcal{C}(X) = \mathcal{C}(x), $} {\footnotesize $ X - 
\frac{\hat \nu \Sigma_{J}^T [X^T \hat \nu]_J }{\|\hat \nu\|_2^2 \Sigma_{JJ}} = x - 
\frac{\hat \nu \Sigma_{J}^T [x^T \hat \nu]_J }{\|\hat \nu\|_2^2 \Sigma_{JJ}} \Big )$} controls the selective Type I error rate under \eqref{eq:mod} and can be efficiently computed. Here,  $[x^T \hat \nu]_{J} \in \mathbb{R}^{|J|}$ represents the vector subset with indices in $J$;  $\Sigma_{J} \in \mathbb{R}^{q\times |J|}$ is the submatrix in $\Sigma$ with columns in $J$; and $\Sigma_{JJ} \in \mathbb{R}^{|J|\times |J|}$ is the submatrix in $\Sigma$ with row and column indices in $J$.

Furthermore, our $p$-values can be used to derive selective confidence intervals for a difference in means for feature $j$~\citep{Lee2016-te,fithian2014optimal}. Under our setup, computing the confidence intervals amounts to a root-finding problem, which can be efficiently solved using bisection~\citep{chen2023more,Chen2020-rh}. This extension would enhance data uncertainty evaluation: for instance, both $p$-values and confidence intervals on differing gene expression profiles are used to guide downstream scientific inquiries.

% \bigskip
% \begin{center}
% {\large\bf SUPPLEMENTAL MATERIALS}
% \end{center}

% \begin{description}

% \item[Title:] Brief description. (file type)

% \item[R-package for  MYNEW routine:] R-package ÒMYNEWÓ containing code to perform the diagnostic methods described in the article. The package also contains all datasets used as examples in the article. (GNU zipped tar file)

% \item[HIV data set:] Data set used in the illustration of MYNEW method in Section~ 3.2. (.txt file)

% \end{description}

\section*{Acknowledgments}

We acknowledge funding from the following sources: Natural Sciences and Engineering Research Council of Canada Discovery Grant to LG and Stanford Data Science Fellowship to YC. This work was also partially supported by National Institutes of Health grants (R01 EB026908, R01 GM123993, and R01 DA047869).

\bibliographystyle{apalike}
\bibliography{testing_sf}
\section*{Appendix}
\appendix
%!TEX root = testing_sf.tex

\newcommand{\indep}{\perp \!\!\! \perp}

\section{Proof of Theorem 1}
\label{sec:proof_theorem_1}
We now prove Theorem 1. Observe that 
\begin{align} 
X 
&= X - \frac{\hat \nu}{\|\hat \nu\|_2^2} \left ( \frac{\Sigma_j}{\Sigma_{jj}}\right )^T [X^T \hat \nu]_j  + \frac{\hat \nu}{\|\hat \nu\|_2^2} \left ( \frac{\Sigma_j}{\Sigma_{jj}}\right )^T [X^T \hat \nu]_j \nonumber \\ 
&= U(X) + \frac{\hat \nu}{\|\hat \nu\|_2^2} \left ( \frac{\Sigma_j}{\Sigma_{jj}}\right )^T [X^T \hat \nu]_j, \label{eq:decomp-appendix}
\end{align} 
where $U(X)$ is defined in equation (13) of the main text. 

Substituting \eqref{eq:decomp-appendix} into the definition of $p_{j, \text{selective}}$ in equation \eqref{eq:selective} of the main text yields: 
\begin{align} 
p_{j, \text{selective}}  &= \mathbb{P}_{\hat H_{0j}} \Big (  \big |[X^T \hat \nu]_j \big | \geq \big |[x^T \hat \nu]_j  \big | ~ \Big |~ \mathcal{C} \left (U(x) +   \frac{\hat \nu}{\|\hat \nu\|_2^2} \left ( \frac{\Sigma_j}{\Sigma_{jj}}\right )^T [X^T \hat \nu]_j \right ) = \mathcal{C}(x),~ U(X) = U(x) \Big ) \nonumber \\ 
&=\mathbb{P}_{\hat H_{0j}} \Big (  \big |[X^T \hat \nu]_j \big | \geq \big |[x^T \hat \nu]_j  \big | ~ \Big |~ \mathcal{C} \left (x'([X^T \hat \nu]_j, j) \right ) = \mathcal{C}(x), ~U(X) = U(x) \Big ), \label{eq:sub} 
\end{align} 
where the second equality in \eqref{eq:sub} follows from the definition of $x'(\phi, j)$ in Theorem 1.  For any $\phi \in \mathbb{R}$, we can rewrite $x'(\phi, j)$ as:
\begin{align*} 
&x + (\phi - ( \bar{x}_{\hat{G}j} - \bar{x}_{\hat{G'}j})) \left ( \frac{ \hat\nu }{ \|\hat\nu\|_2^2 } \right ) \left ( \frac{\Sigma_j}{\Sigma_{jj}} \right )^T \\ 
&= x + (\phi - [x^T \hat \nu]_j]) \left ( \frac{ \hat\nu }{ \|\hat\nu\|_2^2 } \right ) \left ( \frac{\Sigma_j}{\Sigma_{jj}} \right )^T \\ 
&= U(x) + \left ( \frac{ \hat\nu }{ \|\hat\nu\|_2^2 } \right ) \left ( \frac{\Sigma_j}{\Sigma_{jj}} \right )^T \phi.
\end{align*} 

To simplify \eqref{eq:sub}, we will show that 
\begin{align} 
U(X) \indep [\hat X^T \hat \nu]_j. \label{eq:ind}
\end{align} 

First, observe that 
\begin{align*} 
U(X) &= X - \frac{\hat \nu}{\|\hat \nu\|_2^2} \left ( \frac{\Sigma_j}{\Sigma_{jj}}\right )^T [X^T \hat \nu]_j \\ 
&= X - \frac{\hat \nu \hat \nu^T}{\|\hat \nu\|_2^2} X + \frac{\hat \nu \hat \nu^T}{\|\hat \nu\|_2^2} X - \frac{\hat \nu}{\|\hat \nu\|_2^2} \left ( \frac{\Sigma_j}{\Sigma_{jj}}\right )^T [X^T \hat \nu]_j \\ 
&=  \left (I_n - \frac{\hat \nu \hat \nu^T}{\|\hat \nu\|_2^2} \right ) X + \frac{\hat \nu }{\|\hat \nu\|_2^2} \left ( X^T \hat \nu - \frac{\Sigma_j}{\Sigma_{jj}} [X^T \hat \nu]_j \right )^T. 
\end{align*} 
Thus, to show \eqref{eq:ind}, it suffices to show that the two terms in the sum are independent of $ [\hat X^T \hat \nu]_j$; i.e., 
\begin{align} 
\left (I_n - \frac{\hat \nu \hat \nu^T}{\|\hat \nu\|_2^2} \right ) X \indep \hat X^T \hat \nu \label{eq:orthog}
\end{align} 
and 
\begin{align} 
\left ( X^T \hat \nu - \frac{\Sigma_j}{\Sigma_{jj}} [X^T \hat \nu]_j \right ) \indep [X^T \hat \nu]_j. \label{eq:therest}
\end{align} 

We will start by showing \eqref{eq:orthog}. Observe that $I_n - \frac{\hat \nu \hat \nu^T}{\|\hat \nu\|_2^2}$ is the orthogonal projection matrix onto the subspace orthogonal to $\hat \nu$. Thus, $\left (I_n - \frac{\hat \nu \hat \nu^T}{\|\hat \nu\|_2^2} \right ) \hat \nu = 0$, and  \eqref{eq:orthog} follows from properties of the multivariate normal distribution. 

To show \eqref{eq:therest}, we will need to state and prove an intermediate result. 

\begin{lemma} 
\label{lem:obscure}
Suppose that $\left [ \begin{matrix} z_1 \\ z_2 \end{matrix} \right ] \sim N_q \left (\left [ \begin{matrix} \mu_1 \\ \mu_2 \end{matrix} \right ] , \left[ \begin{matrix}  \Sigma_{11} & \Sigma_{12} \\ \Sigma_{21} & \Sigma_{22} \end{matrix} \right ]  \right )$, where $z_1$ is a $r$-vector and $z_2$ is a $(q-r)$-vector, and $\Sigma_{11}$ is invertible. Then, 
$z_1$ and $z_2 -  \Sigma_{21} \Sigma_{11}^{-1} z_1$ are independent.
\end{lemma}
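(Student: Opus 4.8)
The plan is to reduce the independence claim to the standard fact that uncorrelated \emph{jointly} Gaussian vectors are independent, so that only a one-line covariance computation remains. First I would record that $w := z_2 - \Sigma_{21}\Sigma_{11}^{-1} z_1$ and $z_1$ arise from a single linear transformation of the jointly Gaussian vector $(z_1^\top, z_2^\top)^\top$, namely
\begin{equation*}
\begin{pmatrix} z_1 \\ w \end{pmatrix} = \begin{pmatrix} I_r & 0 \\ -\Sigma_{21}\Sigma_{11}^{-1} & I_{q-r} \end{pmatrix} \begin{pmatrix} z_1 \\ z_2 \end{pmatrix}.
\end{equation*}
Since an affine image of a Gaussian is Gaussian, $(z_1^\top, w^\top)^\top$ is jointly Gaussian, and for jointly Gaussian vectors the two blocks $z_1$ and $w$ are independent if and only if their cross-covariance matrix vanishes.

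It then suffices to compute $\mathrm{Cov}(z_1, w)$. By bilinearity of covariance,
\begin{equation*}
\mathrm{Cov}(z_1, w) = \mathrm{Cov}(z_1, z_2) - \mathrm{Cov}(z_1, z_1)\left(\Sigma_{21}\Sigma_{11}^{-1}\right)^\top = \Sigma_{12} - \Sigma_{11}\left(\Sigma_{11}^{-1}\right)^\top \Sigma_{21}^\top.
\end{equation*}
The only delicate part is the transpose bookkeeping: using that $\Sigma_{11}$ is symmetric, so $\left(\Sigma_{11}^{-1}\right)^\top = \Sigma_{11}^{-1}$, and that the full covariance matrix is symmetric, so $\Sigma_{21}^\top = \Sigma_{12}$, the second term collapses to $\Sigma_{11}\Sigma_{11}^{-1}\Sigma_{12} = \Sigma_{12}$. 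Hence $\mathrm{Cov}(z_1, w) = \Sigma_{12} - \Sigma_{12} = 0$, and independence follows.

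The computation itself is routine; the genuine content --- and the step I would be most careful to state explicitly --- is that zero cross-covariance yields independence only because $(z_1, w)$ is jointly Gaussian, which is precisely why I would begin by exhibiting $(z_1, w)$ as a linear image of $(z_1, z_2)$ rather than merely noting that each block is marginally normal. The invertibility of $\Sigma_{11}$ is used only to make the defining transform well-posed. Finally I would apply the lemma with $z_1 = [X^\top\hat\nu]_j$, $z_2$ the remaining coordinates of $X^\top\hat\nu$, and $\Sigma_{21}\Sigma_{11}^{-1}$ equal to the off-$j$ entries of $\Sigma_j/\Sigma_{jj}$, which reproduces exactly the independence statement \eqref{eq:therest} needed to complete the proof of Theorem~1.
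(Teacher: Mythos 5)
Your proof is correct and takes essentially the same route as the paper: both exhibit $(z_1^\top, (z_2 - \Sigma_{21}\Sigma_{11}^{-1}z_1)^\top)^\top$ as the image of $(z_1^\top, z_2^\top)^\top$ under the same block lower-triangular matrix, invoke joint Gaussianity, and conclude from a vanishing cross-covariance block. The only cosmetic difference is that the paper computes the full transformed covariance $B\Sigma B^\top$, whereas you compute just the off-diagonal block by bilinearity.
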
   
\begin{proof} 
Observe that 
$$\left [  \begin{matrix} z_1 \\ z_2 -   \Sigma_{21}   \Sigma_{11}^{-1} z_1 \end{matrix} \right ] = \left [ \begin{matrix} I_r &    0_{r \times (q-r)} \\ -  \Sigma_{21}     \Sigma_{11}^{-1} & I_{q-r} \end{matrix} \right ]  \left [ \begin{matrix} z_1 \\ z_2 \end{matrix} \right ]. $$
Therefore, for $B =  \left [ \begin{matrix} I_r &    0_{r \times (q-r)} \\ -  \Sigma_{21}     \Sigma_{11}^{-1} & I_{q-r} \end{matrix} \right ]$, we have 
$$ \left [  \begin{matrix} z_1 \\ z_2 -   \Sigma_{21}   \Sigma_{11}^{-1} z_1 \end{matrix} \right ] \sim N_q\left ( B \left [ \begin{matrix} \mu_1 \\ \mu_2 \end{matrix}  \right ] ,   B\left[ \begin{matrix}   \Sigma_{11} &   \Sigma_{12} \\   \Sigma_{21} &   \Sigma_{22} \end{matrix} \right ]B^T \right ).$$ 
Algebra yields 
$$  B \left[ \begin{matrix}   \Sigma_{11} &   \Sigma_{12} \\   \Sigma_{21} &   \Sigma_{22} \end{matrix} \right ]B^T = \left [ \begin{matrix}   \Sigma_{11} &   0_{r \times (q-r)} \\   0_{(q-r) \times r} &   \Sigma_{22} -   \Sigma_{21}   \Sigma_{11}^{-1}   \Sigma_{12}\end{matrix} \right ].$$
Thus, it follows from properties of multivariate normal distributions that $z_1$ and $z_2 -   \Sigma_{21}  \Sigma_{11}^{-1} z_1$ are independent. 

\end{proof} 

Since $X^T \hat \nu \sim N_q(\mu^T \hat \nu, \|\hat \nu\|_2^2 \Sigma)$, it follows from Lemma \ref{lem:obscure} that 
\begin{align} 
[X^T \hat \nu]_j \indep [X^T \hat \nu]_{-j} - \frac{\Sigma_{j, -j}^T}{\Sigma_{jj}} [X^T \hat \nu]_j, \label{eq:applylem}
\end{align} 
where $[X^T \hat \nu]_{-j}$ denotes the result of removing the $j$th component from $X^T \hat \nu \in \mathbb{R}^q$, and $\Sigma_{j, -j}$ denotes the result of removing the $j$th component from $\Sigma_j \in \mathbb{R}^q$. Finally, to complete the proof of \eqref{eq:therest}, note that by algebra, the $j$th entry of the vector $X^T \hat \nu - \frac{\Sigma_j}{\Sigma_{jj}} [X^T \hat \nu]_j$ is always 0. This implies that any random variable that is independent of the $q-1$ subvector $[X^T \hat \nu]_{-j} - \frac{\Sigma_{j, -j}^T}{\Sigma_{jj}} [X^T \hat \nu]_j$ in \eqref{eq:applylem} is also independent of the full vector; this completes the proof.

We have now shown \eqref{eq:ind}. Thus, we can apply \eqref{eq:ind} to simplify \eqref{eq:sub} as: 
\begin{align} 
p_{j, \text{selective}}
&=\mathbb{P}_{\hat H_{0j}} \Big (  \big |[X^T \hat \nu]_j \big | \geq \big |[x^T \hat \nu]_j  \big | ~ \Big |~ \mathcal{C} \left (x'([X^T \hat \nu]_j, j) \right ) = \mathcal{C}(x) \Big ) \nonumber \\ 
&= \mathbb{P}_{\hat H_{0j}} \Big (  \big |[X^T \hat \nu]_j \big | \geq \big |[x^T \hat \nu]_j  \big | ~ \Big |~ [X^T \hat \nu]_j \in \hat S_j \Big ), \label{eq:trunc}
\end{align}
where the second equality follows from the definition of $\hat S_j$ in equation \eqref{eq:defS} of the main text as $\left \{ \phi \in \mathbb{R}: C(x) = \mathcal{C}\left(x'(\phi,j)\right ) \right \}$. 

Now, under $\hat H_{0j}: [\mu^T \hat \nu]_j = 0$, $[X^T \hat \nu]_j \sim N(0, \|\hat \nu\|_2^2 \Sigma_{jj})$. Thus, defining $\phi \sim N(0, \|\hat \nu\|_2^2 \Sigma_{jj})$, we can rewrite \eqref{eq:trunc} as: 
\begin{align} 
p_{j, selective} &= \mathbb{P}\Big ( \big | \phi \big | \geq \big |[x^T \hat \nu]_j  \big | ~ \Big | ~ \phi \in \hat S_j \Big ) \nonumber \\ 
&= \mathbb{P}\Big( \phi \geq \big |[x^T \hat \nu]_j  \big | ~ \Big | ~ \phi \in \hat S_j \Big ) +  \mathbb{P}\Big( \phi \leq -\big |[x^T \hat \nu]_j  \big | ~ \Big | ~ \phi \in \hat S_j \Big ) \nonumber \\ 
&= 1 - \mathbb{P}\Big( \phi \leq \big |[x^T \hat \nu]_j  \big | ~ \Big | ~ \phi \in \hat S_j \Big ) +  \mathbb{P}\Big( \phi \leq -\big |[x^T \hat \nu]_j  \big | ~ \Big | ~ \phi \in \hat S_j \Big )  \nonumber \\ 
&= 1 - \mathbb{F} \left (\big | [\hat \nu^T x]_j \big | ; 0, \Sigma_{jj}\|\hat \nu\|_2^2; \mathcal{\hat S}_j \right ) 
+ \mathbb{F}\left (-\big | [\hat \nu^T x]_j \big | ; 0, \Sigma_{jj} \|\hat \nu\|_2^2;  \mathcal{\hat S}_j \right ) \label{eq:trunc-norm}, 
\end{align} 
where $\mathbb{F}(t; \mu, \sigma, \mathcal{S})$ denotes the cumulative distribution function (CDF) of a $N(\mu, \sigma^2)$ random variable truncated to the set $\mathcal{S}$. This is equation \eqref{eq:selective_cdf} in the main text. 

We will now prove that rejecting $\hat H_{0j}$ based on $p_{j,\text{selective}} \leq \alpha$ controls the selective type I error rate in the sense of \eqref{eq:selective_type_1}. First, recall that we decided to test $\hat H_{0j}: [\mu^T \hat \nu]_j = 0$ because $\mathcal{C}(X) = \mathcal{C}(x)$. Thus, we need to show: 
\begin{align} 
\mathbb{P}_{\hat H_{0j}} \left (\text{Reject } \hat H_{0j} \text{ at level } \alpha \mid \mathcal{C}(X) = \mathcal{C}(x) \right ) \leq \alpha, ~~~ \text{for all } \alpha \in (0, 1). \label{eq:type1-app}
\end{align} 

Define the function 
\begin{align} 
p(t) = \mathbb{P}\Big ( \big | \phi \big | \geq \big |t \big | ~ \Big | ~ \phi \in \hat S_j \Big ) \label{eq:pval-func}
\end{align} 
Then, observe that the following holds for any $\alpha \in (0,1)$: 
\begin{align} 
&\mathbb{P}_{\hat H_{0j}} \left (\text{Reject } \hat H_{0j} \text{ at level } \alpha \mid \mathcal{C}(X) = \mathcal{C}(x), U(X) = U(x) \right ) \nonumber \\
&= \mathbb{P}( p(\phi) \leq \alpha \mid \phi \in \hat S_j) \nonumber \\ 
&= \alpha, \label{eq:pit}
\end{align} 
where the first equality follows from \eqref{eq:trunc-norm}, and the second equality follows from the probability integral transform theorem. 
Therefore, we have that
\begin{align*} 
&\mathbb{P}_{\hat H_{0j}} \left (\text{Reject } \hat H_{0j} \text{ at level } \alpha \mid \mathcal{C}(X) = \mathcal{C}(x) \right ) \\
&= \mathbb{E}_{\hat H_{0j}} \left [ \mathbb{P}_{\hat H_{0j}} \left (\text{Reject } \hat H_{0j} \text{ at level } \alpha \mid \mathcal{C}(X) = \mathcal{C}(x), U(X) = U(x) \right )  \mid  \mathcal{C}(X) = \mathcal{C}(x) \right ] \\ 
&= \mathbb{E}_{\hat H_{0j}} [\alpha \mid \mathcal{C}(X) = \mathcal{C}(x) ] \\ 
&= \alpha,
\end{align*}
where the first equality follows from the tower property of conditional expectation, and the second equality follows from \eqref{eq:pit}. That is, equation \eqref{eq:type1-app} holds with equality.  

\section{Additional details for the extensions to $k$-means clustering in Section 4.3}

In this section, we provide more details on the extensions of proposals in Section 3 of the main text to $k$-means clustering.

\subsection{A brief overview of the $k$-means algorithm}

We briefly review the $k$-means clustering algorithm in this section. For a set of samples $x_1,\ldots,x_n \in \mathbb{R}^q$ and a positive integer $K$, $k$-means clustering partitions the $n$ samples into disjoint subsets $\hat{\mathcal{C}}_1,\ldots,\hat{\mathcal{C}}_K$  by solving the following optimization problem~\citep{Aloise2009-on,Lloyd2006-io}:
\begin{align}
\begin{split}
\label{eq:k_means_objective}
  &\underset{\mathcal{C}_1,\ldots,\mathcal{C}_K}{\text{minimize}}\;\Bigg\{ \sum_{k=1}^K \sum_{i \in \mathcal{C}_k} \bigg\Vert x_i -  \sum_{i \in \mathcal{C}_k} x_i/|\mathcal{C}_k| \bigg\Vert_2^2 \Bigg\} \\
  &\text{subject to} \;
 \bigcup_{k=1}^K \mathcal{C}_k = \{1,\ldots, n\},\mathcal{C}_k\cap \mathcal{C}_{k'} = \emptyset ,\forall k\neq k'.
\end{split}
\end{align} 
It is not typically possible to solve for the global optimum in \eqref{eq:k_means_objective}~\citep{Aloise2009-on}; one of the most popular approaches is Lloyd's algorithm~\citep{Lloyd2006-io}, given in Algorithm~\ref{algo:k_means_alt_min}. In Lloyd's algorithm, we first sample $K$ out of $n$ observations as initial centroids (step 1 in Algorithm~\ref{algo:k_means_alt_min}), followed by assigning each observation to its closest centroid (step 2). Next, we iterate between steps 1 and 2 until the cluster assignments stop changing, and the algorithm is guaranteed to converge to a local optimum~\citep{Hastie2001-mr}.

\begin{algorithm}
\DontPrintSemicolon % Some LaTeX compilers require you to use \dontprintsemicolon instead
\KwIn{Data $x_1,\ldots,x_n \in \mathbb{R}^q$, number of output clusters $K$, maximum iteration $T$, random seed $s$.}
\KwOut{Cluster assignments $(c_1^{(t)},\ldots, c_n^{(t)})$.}
1. Initialize the centroids $(m_1^{(1)},\ldots,m_K^{(1)})$ by sampling 
 $K$ observations from $x_1,\ldots, x_n$ without replacement, using the random seed $s$. \\ 
2. Compute assignments  $c_i^{(1)} \leftarrow \underset{1\leq k\leq K}{\text{argmin}} \left\Vert x_{i}-m_k^{(1)}\right\Vert_2^2,  \, i = 1,\ldots, n. $ \; \label{algo:step2}
3. Initialize $t=1$. \;
\While{$t\leq T$}{
  a. Update centroids:  $m_k^{(t+1)} \leftarrow (\sum_{i:c_i^{(t)}=k} x_i)/\sum_{i=1}^n \mathds{1}\{c_i^{(t)}=k\}, \,  k = 1,\ldots, K. $\;
  b. Update assignment:  $c_i^{(t+1)} \leftarrow \underset{1\leq k\leq K}{\text{argmin}} \left\Vert x_{i}-m_k^{(t+1)}\right\Vert_2^2,  \, i = 1,\ldots, n. $\;
  c. \textbf{if} $c_i^{(t+1)} = c_i^{(t)}$ \emph{for all} $1\leq i \leq n$ \\
  \quad\quad \textbf{break} \\
  \textbf{else} \\
  \quad\quad $t \leftarrow t+1. $\;
}
\Return{$(c_1^{(t)},\ldots, c_n^{(t)})$}.\;
\caption{Lloyd's algorithm for $k$-means clustering~\citep{Lloyd2006-io}}
\label{algo:k_means_alt_min}
\end{algorithm}

\subsection{Characterization and efficient computation of  $p_{j,\text{k-means}}$}
Theorem~\ref{thm:truncnorm_kmeans} below is a direct extension of Theorem 1 (stated in Section~\ref{section:truncated_pval} of the main text and proven in Appendix~\ref{sec:proof_theorem_1}) to the case of $k$-means clustering. Recalling $U(x) = x - \frac{\hat \nu \Sigma_{j}^T [x^T \hat \nu]_j }{\|\hat \nu\|_2^2 \Sigma_{jj}}$, we first rewrite $p_{j,\text{k-means}}$:
\begin{align*} 
p_{j,\text{k-means}} 
&= \mathbb{P}_{{\hat{H}_{0j}}} \Bigg (  \big |[X^T \hat\nu]_j \big | \geq \big |[X^T \hat\nu]_j  \big | ~ \Bigg | ~ U(X) = U(x),
\bigcap_{t=0}^{T}\bigcap_{i=1}^n\big\{ c_i^{(t)}(X) = c_i^{(t)}(x)\big\}  \Bigg  ).
\end{align*}
\begin{theorem} 
\label{thm:truncnorm_kmeans}
Suppose that $x$ is a realization from (1), i.e., each row $x_i$ is drawn independently from $N_q(\mu_i, \Sigma), \, i = 1, 2, \ldots, n,$ and $\mathbb{F}(t; \mu, \sigma, \mathcal{S})$ denotes the cumulative distribution function (CDF) of a $N(\mu, \sigma^2)$ random variable truncated to the set $\mathcal{S}$. If $\hat H_{0j}: \bar{\mu}_{\hat G j } =  \bar{\mu}_{\hat G' j }$  holds for a given $j \in \{1, 2, \ldots, q\}$, then we have that  
{\small\begin{align} 
p_{j,\text{k-means}} &= 1 - \mathbb{F} \left (\big | [\hat\nu^T x]_j \big | ; 0, \Sigma_{jj}\|\nu\|_2^2;\mathcal{S}_j\Big(x; \bigcap_{t=1}^T C^{(t)}(x)\Big) \right ) + \mathbb{F}\left (-\big | [\hat\nu^T x]_j \big | ; 0, \Sigma_{jj} \|\hat\nu\|_2^2; \mathcal{S}_j\Big(x; \bigcap_{t=1}^T C^{(t)}(x)\Big) \right ).
\end{align}}
Here, $C^{(t)}(x) = (c_{1}^{(t)},\ldots, c_{n}^{(1)})$ is the estimated cluster at the $t$th iteration of the $k$-means algorithm, and 
{\small
\begin{equation} 
\mathcal{S}_j\Big(x; \bigcap_{t=1}^T C^{(t)}(x)\Big)  = 
\left \{ \phi \in \mathbb{R}: \bigcap_{t=1}^T \left \{ {C}^{(t)}\left(x + (\phi - ( \bar{x}_{\hat{G}j} - \bar{x}_{\hat{G}'j})) \left ( \frac{ \hat\nu }{ \|\hat\nu\|_2^2 } \right ) \left ( \frac{\Sigma_j}{\Sigma_{jj}} \right )^T \right ) =  {C}^{(t)}(x)\right \} \right \}. \label{eq:defS_kmeans}
\end{equation} 
}
Furthermore, rejecting $\hat H_{0j}$ whenever $p_{\text{k-means},j} \leq \alpha$ controls the selective type I error rate at level $\alpha$. 
\end{theorem}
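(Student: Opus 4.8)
The plan is to follow the proof of Theorem 1 in Appendix~\ref{sec:proof_theorem_1} almost verbatim, since the only structural change is that the single selection event $\{\mathcal{C}(X)=\mathcal{C}(x)\}$ is replaced by the finer event $\bigcap_{t=0}^{T}\bigcap_{i=1}^n\{c_i^{(t)}(X)=c_i^{(t)}(x)\}$, while the core independence argument is insensitive to which event we condition on. First I would rewrite $X$ via the orthogonal decomposition \eqref{eq:decomp-appendix}, $X = U(X) + \frac{\hat\nu}{\|\hat\nu\|_2^2}\left(\frac{\Sigma_j}{\Sigma_{jj}}\right)^T [X^T\hat\nu]_j$, and substitute it into the definition of $p_{j,\text{k-means}}$. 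The independence of $U(X)$ and $[X^T\hat\nu]_j$ established in \eqref{eq:ind} is a statement purely about the multivariate normal law of $X$ under model (1); it makes no reference to the clustering algorithm, so it transfers unchanged to the present setting.

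The key step is then to argue that, on the fiber $\{U(X)=U(x)\}$, the entire trajectory $(C^{(1)}(X), \ldots, C^{(T)}(X))$ is a deterministic function of the scalar $\phi := [X^T\hat\nu]_j$ alone. Fixing the random seed $s$, Lloyd's algorithm (Algorithm~\ref{algo:k_means_alt_min}) is a deterministic map from a data matrix to its sequence of assignments. Since $x'(\phi,j) = U(x) + \frac{\hat\nu}{\|\hat\nu\|_2^2}\left(\frac{\Sigma_j}{\Sigma_{jj}}\right)^T \phi$ exactly parametrizes the fiber over $U(x)$, any $X$ with $U(X)=U(x)$ satisfies $X = x'([X^T\hat\nu]_j,j)$; hence the event $\bigcap_{t}\bigcap_i\{c_i^{(t)}(X)=c_i^{(t)}(x)\}$ coincides on this fiber with $\{\phi \in \mathcal{S}_j\}$, where $\mathcal{S}_j$ is the truncation set in \eqref{eq:defS_kmeans}. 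Combining this with the independence result collapses the conditioning on $U(X)=U(x)$ and reduces $p_{j,\text{k-means}}$ to $\mathbb{P}(|\phi| \geq |[x^T\hat\nu]_j| \mid \phi \in \mathcal{S}_j)$ with $\phi \sim N(0, \Sigma_{jj}\|\hat\nu\|_2^2)$ under $\hat H_{0j}$; expanding this two-sided probability in terms of the truncated-normal CDF $\mathbb{F}$ yields the displayed expression, exactly as in \eqref{eq:trunc-norm}.

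For selective Type I error control I would reuse the final argument of the Theorem 1 proof. Defining $p(t) = \mathbb{P}(|\phi| \geq |t| \mid \phi \in \mathcal{S}_j)$, the probability integral transform makes $p(\phi)$ uniform on $(0,1)$ conditionally on the full event $\{\bigcap_{t,i}\{c_i^{(t)}(X)=c_i^{(t)}(x)\}, U(X)=U(x)\}$, so the conditional rejection probability equals $\alpha$. Because the intermediate-assignment event is a refinement of the converged-partition event $\{\mathcal{C}(X)=\mathcal{C}(x)\}$ that drives selection, the tower property of conditional expectation then gives $\mathbb{P}_{\hat H_{0j}}(\text{Reject} \mid \mathcal{C}(X)=\mathcal{C}(x)) = \alpha$, which is control in the sense of \eqref{eq:selective_type_1}; this is the same manipulation used at the end of Appendix~\ref{sec:proof_theorem_1} and is justified by the fact that conditioning on extra information does not sacrifice selective validity (Proposition 3 of \citet{fithian2014optimal}).

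The main obstacle is the verification in the second paragraph: one must check carefully that conditioning on all intermediate assignments, rather than only on the converged partition, genuinely reduces to the single truncation set $\mathcal{S}_j$ on the fiber. The subtlety is that the initialization step selects centroids by index (via the fixed seed $s$) but evaluates them at the perturbed data $x'(\phi,j)$, so I would need to confirm that each $C^{(t)}(X)$ depends on $X$ only through $\phi$ once $U(X)$ is held fixed, uniformly over $t$. Granting this, the analytic form and error control follow by the same steps as Theorem 1; the separate computational claim that $\mathcal{S}_j$ is an intersection of $O(nTK)$ quadratic solution sets is handled by the recipe adapted from \citet{chen2022selective}.
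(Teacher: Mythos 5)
Your proposal is correct and matches the paper's approach exactly: the paper's own proof of this theorem is a one-line appeal to the proof of Theorem~\ref{thm:truncnorm}, replacing the event $\{\mathcal{C}(X)=\mathcal{C}(x)\}$ with $\bigcap_{t=1}^T\{C^{(t)}(X)=C^{(t)}(x)\}$, which is precisely what you carry out (with more care, including the correct observation that on the fiber $\{U(X)=U(x)\}$ the whole assignment trajectory of the deterministic, seed-fixed Lloyd's algorithm depends on $X$ only through $\phi=[X^T\hat\nu]_j$). The subtlety you flag about the initialization step is resolved exactly as you suggest, since $X=x'(\phi,j)$ on the fiber makes every $C^{(t)}(X)$ a deterministic function of $\phi$.
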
  

\begin{proof}
    The proof of Theorem~\ref{thm:truncnorm_kmeans} follows directly from the proof of Theorem 1 in Section 1 of the Appendix by replacing the clustering output $\mathcal{C}(\cdot)$ with $\bigcap_{t=1}^T C^{(t)}(\cdot)$.
\end{proof}

It follows from Theorem~\ref{thm:truncnorm_kmeans} that computing the $p$-value $p_{j,\text{k-means}}$ reduces to characterizing the set $\mathcal{S}_j\Big(x; \bigcap_{t=1}^T C^{(t)}(x)\Big) $ in \eqref{eq:defS_kmeans}. It turns out we could directly leverage the results in \citet{chen2022selective} (reproduced below as Lemma~\ref{lemma:chen_kmeans}), which tests for a difference in the means for the entire vector, to arrive at a computationally-efficient recipe.

\begin{corollary}[Chen and Witten (2023)]
    \label{lemma:chen_kmeans}
For any $w \in \mathbb{R}^q$, define the set
{
\begin{equation} 
\mathcal{S}\Big(w; \bigcap_{t=1}^T C^{(t)}(x)\Big)  = 
\left \{ \phi \in \mathbb{R}: \bigcap_{t=1}^T \left \{ {C}^{(t)}\left(\left (I - \frac{\hat \nu \hat \nu^T}{\|\hat \nu\|_2^2} \right ) x  + \frac{ \phi \hat \nu w^T}{\|\hat \nu\|_2^2}\right ) =  {C}^{(t)}(x)\right \} \right\}.
% \mathcal{S}(w) \equiv \left \{ \phi \in \mathbb{R}:\mathcal{C}\left( \left (I - \frac{\hat \nu \hat \nu^T}{\|\hat \nu\|_2^2} \right ) x  + \frac{ \phi \hat \nu w^T}{\|\hat \nu\|_2^2}   \right) = \mathcal{C}(x) \right \}.
\label{eq:kmeans_sw}
\end{equation}  
}
% needs to polish!
Suppose that we apply the $k$-means clustering algorithm (Algorithm~\ref{algo:k_means_alt_min}) to a matrix ${x}\in\mathbb{R}^{n\times q}$, to obtain $K$ clusters in at most $T$ steps. Then, the set $\mathcal{S}\Big(w; \bigcap_{t=1}^T C^{(t)}(x)\Big)$ defined in \eqref{eq:kmeans_sw} can be computed in $\mathcal{O}\big(nKT(n+q)+nKT\log(nKT)\big)$ operations.
\end{corollary}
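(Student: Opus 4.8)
The plan is to follow the strategy of \citet{chen2022selective} and reduce the computation of $\mathcal{S}\big(w; \bigcap_{t=1}^T C^{(t)}(x)\big)$ in \eqref{eq:kmeans_sw} to intersecting the solution sets of $\mathcal{O}(nKT)$ quadratic inequalities in the scalar $\phi$. The starting observation is that the perturbed data matrix in \eqref{eq:kmeans_sw} is affine in $\phi$: writing its $i$th row as $x_i'(\phi) = a_i + \phi\, b_i$, we have $a_i = x_i - \hat\nu_i(\hat\nu^\top x)/\|\hat\nu\|_2^2$ and $b_i = \hat\nu_i\, w^\top/\|\hat\nu\|_2^2$, both of which can be computed for all $i$ in $\mathcal{O}(nq)$ operations once $\hat\nu^\top x$ has been formed.

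The crucial structural fact is that, because the set in \eqref{eq:kmeans_sw} conditions on \emph{all} of the intermediate assignments $C^{(t)}(x)$, we may treat every cluster membership as fixed to its observed value when characterizing $\mathcal{S}$. Consequently, each centroid produced by Lloyd's algorithm (Algorithm~\ref{algo:k_means_alt_min}) run on $x'(\phi)$, namely $m_k^{(t)}(\phi) = \sum_{i:\, c_i^{(t)}(x)=k} x_i'(\phi)/|C_k^{(t)}|$, is an average of affine functions of $\phi$ and is therefore itself affine in $\phi$, say $m_k^{(t)}(\phi) = \alpha_k^{(t)} + \phi\,\beta_k^{(t)}$. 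I would establish this by induction on the iteration index $t$: the initial centroids in step~1 are sampled rows of $x'(\phi)$, hence affine, and each centroid update in the loop preserves affineness once the memberships are held fixed.

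Given affine centroids, the event that observation $i$ retains its observed label $k^\ast = c_i^{(t)}(x)$ at iteration $t$ is, for each competing cluster $k$, the requirement $\|x_i'(\phi) - m_{k^\ast}^{(t)}(\phi)\|_2^2 \le \|x_i'(\phi) - m_k^{(t)}(\phi)\|_2^2$. Since $x_i'(\phi) - m_k^{(t)}(\phi)$ is affine in $\phi$, each side is a quadratic polynomial in $\phi$, so this is a single quadratic inequality $A\phi^2 + B\phi + C \le 0$ whose solution set is a union of at most two intervals. I would then write $\mathcal{S}\big(w; \bigcap_{t=1}^T C^{(t)}(x)\big)$ as the intersection, over the $\mathcal{O}(nKT)$ triples $(t,i,k)$ together with the analogous inequalities arising from the initial assignment in step~2, of these interval sets.

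Finally, I would bound the running time. Recomputing the affine centroid coefficients $\{(\alpha_k^{(t)},\beta_k^{(t)})\}$ at each iteration costs $\mathcal{O}(nq)$, since the clusters partition the $n$ observations, and the coefficients $(A,B,C)$ of each of the $\mathcal{O}(nKT)$ quadratic inequalities follow from $\mathcal{O}(q)$-dimensional inner products of the precomputed affine pieces; aggregating these centroid updates and coefficient evaluations over all iterations and cluster comparisons stays within $\mathcal{O}\big(nKT(n+q)\big)$ operations. Collecting the $\mathcal{O}(nKT)$ interval endpoints, sorting them, and sweeping once to form the intersection costs $\mathcal{O}\big(nKT\log(nKT)\big)$, and adding the two contributions yields the claimed bound. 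The main obstacle is the structural argument of the second paragraph: one must verify that conditioning on \emph{every} intermediate assignment is precisely what renders the centroids affine in $\phi$, so that the otherwise nonlinear, iterate-dependent assignment rule of $k$-means collapses into finitely many quadratic constraints. Once this is in place, the complexity accounting is routine and inherits directly from \citet{chen2022selective}.
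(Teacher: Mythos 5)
Your proof is correct, but it takes a different route from the paper: the paper's proof of this corollary is a two-line reduction, observing that Proposition~5 of \citet{chen2022selective} already establishes the special case $w = \hat\nu^T x/\Vert \hat\nu^T x\Vert_2$, and that since no special property of that particular vector is used anywhere in their proof, one may substitute an arbitrary $w$ (in particular $\Sigma_j/\Sigma_{jj}$) verbatim. You instead re-derive the underlying machinery from scratch: the rows of the perturbed matrix are affine in $\phi$ for any fixed $w$; conditioning on \emph{all} intermediate assignments lets you argue by induction on $t$ that, on the event that assignments through iteration $t-1$ match the observed ones, each centroid $m_k^{(t)}(\phi)$ is affine in $\phi$; hence each membership-retention condition at $(t,i,k)$ --- including those from the initialization in step~2 of Algorithm~\ref{algo:k_means_alt_min} --- is a single quadratic inequality in $\phi$ whose solution set is a union of at most two intervals, and intersecting the $\mathcal{O}(nKT)$ resulting sets via sorting endpoints gives the $\mathcal{O}(nKT\log(nKT))$ term. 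Your complexity accounting is if anything slightly tighter than needed (you get $\mathcal{O}(nqT + nKTq)$ for the coefficient computations, which sits comfortably inside the claimed $\mathcal{O}(nKT(n+q))$ upper bound), and you correctly identify the one genuinely load-bearing step --- that conditioning on every intermediate assignment is what collapses the iterate-dependent argmin rule into finitely many quadratics; this is exactly the structural insight of \citet{chen2022selective} that the paper imports by citation. What your approach buys is self-containedness and an explicit demonstration of \emph{why} the generalization to arbitrary $w$ is free (affineness in $\phi$ never uses anything about $w$); what the paper's approach buys is brevity, at the cost of asking the reader to verify the ``no special properties were used'' claim against the external proof.
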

\begin{proof}
   Corollary~\ref{lemma:chen_kmeans} follows from the observation Proposition 5 in \citet{chen2022selective} proves the special case of $\mathcal{S}\Big(w; \bigcap_{t=1}^T C^{(t)}(x)\Big)$ with $w = \nu^Tx/\Vert  \hat\nu^Tx \Vert_2$. However, no special properties of the vector $ \hat\nu^Tx/\Vert  \hat\nu^Tx \Vert_2$ was used in the proof, and replacing $ \hat\nu^Tx/\Vert  \hat\nu^Tx \Vert_2$ with $\Sigma_j/\Sigma_{jj}$ everywhere in their proof yields the desired result.
\end{proof}

% In Section~\ref{section:hypothesis}, we have shown that the $p$-value $\pK$ \eqref{eq:p_val_k_means_chen} involves the set $\mathcal{S}_T$ in \eqref{eq:s_set}.
% Here, we start with a high-level summary of our approach to characterizing $\mathcal{S}_T$ in \eqref{eq:s_set}. First, we rewrite $\mathcal{S}_T$ as $\left\{ \phi\in\mathbb{R}:\bigcap_{i=1}^{n} \left\{c_i^{(0)}\left({x}'(\phi)\right) = c_i^{(0)}\left({x}\right)\right\}   \right\} \cap \left\{ \phi\in\mathbb{R}: \bigcap_{t=1}^{T}\bigcap_{i=1}^{n} \left\{c_i^{(t)}\left({x}'(\phi)\right) = c_i^{(t)}\left({x}\right)\right\} \right\}$. Next, we consider the first term in the intersection: according to step 2. of Algorithm~\ref{algo:k_means_alt_min}, for $i=1,\ldots,n$, $c_i^{(0)}\left({x}'(\phi)\right) = c_i^{(0)}\left({x}\right)$ if and only if for $i=1,\ldots, n$, the initial randomly-sampled centroid to which $\qty[{x}'(\phi)]_i$ is closest coincides with the initial centroid to which ${x}_i$ is closest. This condition can be expressed using $K-1$ inequalities. Furthermore, the same intuition holds for the second term in the intersection, except that the centroids are a function of the cluster assignments in the previous iteration. We formalize this intuition in Proposition~\ref{prop:phi_ineq}.

% We next briefly describe the computation procedures 

\section{Additional simulation results}

In Section~\ref{subsec:power}, we compared the conditional power of the tests based on $p_{j, \text{k-means}}$, $p_{j, \text{average}}$, $p_{j, \text{centroid}}$, and $p_{j, \text{single}}$  under model \eqref{eq:mod}.

Here, we consider a different notion of power that does not condition on the estimated clusters $\hat{G}$ and $\hat{G}'$ being true clusters. In this case, comparing the power of the tests requires a bit of care, because the ``effect size'', $\Big|[\mu^T \hat\nu]_j \Big|$ may differ across simulated data sets from the same data-generating distribution, as the estimated clusters and the corresponding contrast vector $\hat\mu$ might vary across simulated data sets. As a result, we consider the power of the tests \emph{as a function of $\Big|[\mu^T \hat\nu]_j \Big|$}. We fit a regression spline using the \texttt{gam} function in the \texttt{R} package \texttt{mgcv}~\citep{wood2017} to obtain a smooth estimate of power on the same simulated data sets from Section~\ref{subsec:power}. 

The results are in Figure~\ref{fig:smooth_power}. The power of the tests that reject $\hat H_{0j}: \bar{\mu}_{\hat G j } =  \bar{\mu}_{\hat G' j}$ when $p_{j, \text{k-means}}$, $p_{j, \text{average}}$, $p_{j, \text{centroid}}$, or $p_{j, \text{single}}$ is less than $\alpha=0.05$ increases as $\Big|[\mu^T \hat\nu]_j \Big|$ increases. For a given value of $\Big|[\mu^T \hat\nu]_j \Big|$, the power of different tests depends on the correlation among features. When data have low to moderate correlations, the tests based on  $p_{j, \text{average}}$ and $p_{j, \text{centroid}}$ generally have higher power than the tests based on $p_{j, \text{k-means}}$ and $p_{j, \text{single}}$. When the correlation among features is high (e.g., $\rho=0.8$), the test based on $p_{j, \text{single}}$ has the highest power, followed by those based on $p_{j, \text{average}}$ and $p_{j, \text{centroid}}$; the test based on  $p_{j, \text{k-means}}$ has the lowest power. 
\begin{figure}[htbp!]
  \centering
\includegraphics[width=0.8\linewidth]{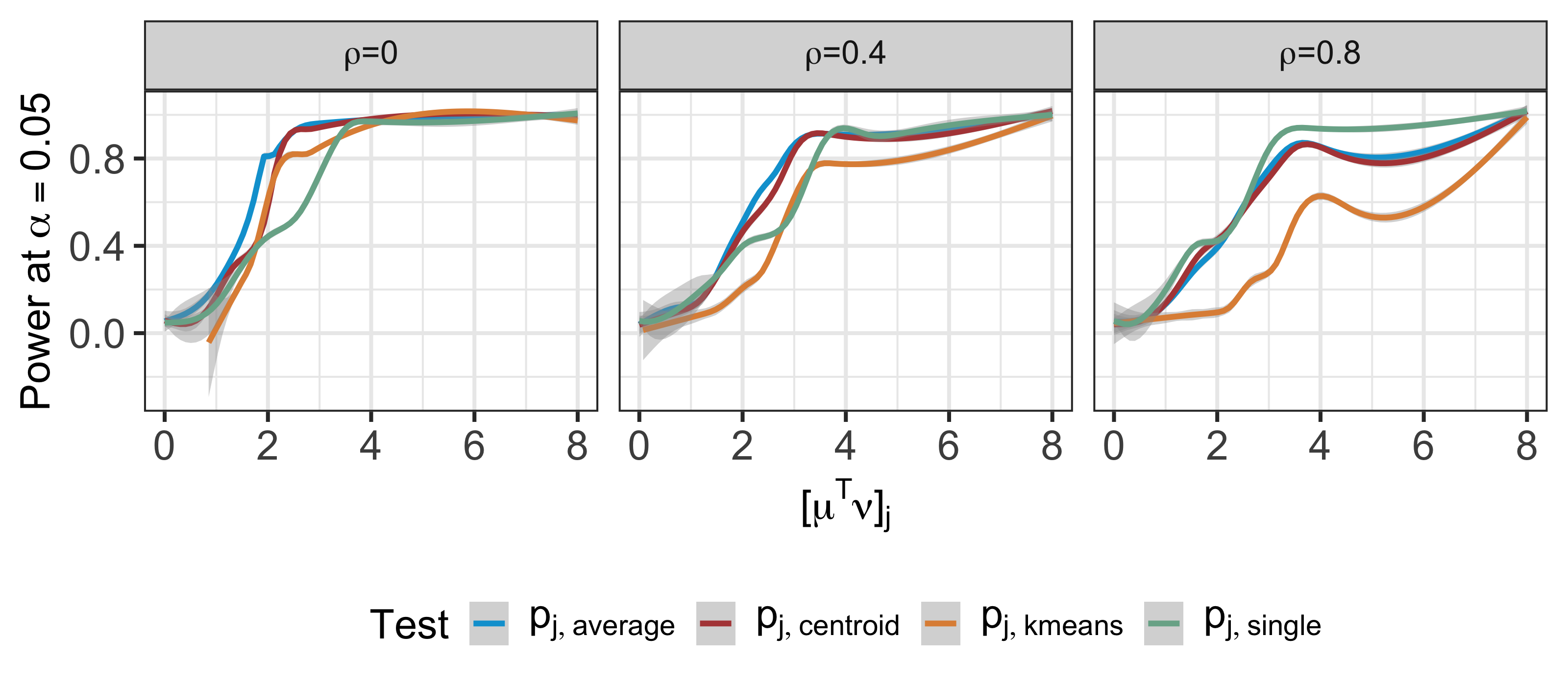}
\caption{We fit a regression spline on the simulated data sets from Section~\ref{subsec:power} to display the power of the tests based on $p_{j, \text{k-means}}$, $p_{j, \text{average}}$, $p_{j, \text{centroid}}$, and $p_{j, \text{single}}$.}
\label{fig:smooth_power}
\end{figure}

% \bibliographystyle{agsm}
% \bibliography{testing_sf}

\end{document}